\documentclass[10pt]{article}
\usepackage{header}

\ifpdf
\hypersetup{
  pdftitle={The Deep Parametric PDE Method:
  Application to Option Pricing},
  pdfauthor={K. Glau and L. Wunderlich}
}
\fi

\title{The Deep Parametric PDE Method:\\
  Application to Option Pricing\thanks{This work was funded 	by the Alan Turing Institute and  EPSRC grant no.  EP/T004738/1.}}

\author{Kathrin Glau\thanks{School of Mathematical Sciences, Queen Mary University of London, Mile End Road, London E1 4NS, United Kingdom 
  ( k.glau@qmul.ac.uk, \l.wunderlich@qmul.ac.uk).} \and Linus Wunderlich\footnotemark[2]}
  
\usepackage{graphicx}
\usepackage{amsthm}
\newtheorem{theorem}{Theorem}
\begin{document}

\maketitle

\begin{abstract}
We propose the deep parametric PDE method to solve  high-dimensional parametric partial differential equations. A single neural network approximates the solution of a whole family of PDEs after being trained without the need of sample solutions. 
As a practical application, we  compute option prices in the multivariate Black-Scholes model.  After a single training phase, the   prices for different time, state and model parameters  are available in milliseconds.  We evaluate the accuracy in the price and a generalisation of the implied volatility 
with examples of up to 25 dimensions.
A comparison with alternative machine learning approaches, confirms the effectiveness of the approach. 

\end{abstract}

\textbf{Keywords}
basket options,
deep neural networks,
high-dimensional problems,
parametric option pricing,
parametric partial differential equations,
uncertainty quantification.

\section{Introduction}
Solving parametric partial differential equations (PDE) in high dimensions is a   major challenge in many areas of science and engineering. In particular, it is of high importance for repetitive tasks in finance, such as 
real-time risk monitoring, uncertainty quantification and credit value adjustments. Frequent requests of derivative prices, require an efficient evaluation  at different times, states and model parameters. 
Modern methods in option pricing based on Monte-Carlo are of limited efficiency. Other methods, such as based on PDEs or transforms, suffer from the curse of dimensionality.

Deep neural networks (DNN) offer efficient approximations with no curse of dimensionality~\cite{grohs:jentzen:18, hutzenthaler:jentzen:20}. 
Thus, we apply them to numerically solve high-dimensional PDEs. 
While evaluating a neural network is fast, the training phase is computationally complex. 
To fully exploit the benefit of DNNs in the PDE context, we propose the \emph{deep parametric PDE method} to learn the solution for all parameters simultaneously.

For an unsupervised training of the network, we first need a formulation of the parametric PDE  as an appropriate optimisation problem. 
The least-squares formulation of a PDE yields a suitable loss function,  as also used in the deep Galerkin method~\cite{sirignano:17} for a fixed parameter set. 
The formulation naturally extends to the parametric setting and we propose a universal approach that  allows for wide applications. As example we compute multi-asset option prices in the Black-Scholes model.\footnote{We provide an implementation of the deep parametric PDE method for two assets at \url{https://github.com/LWunderlich/DeepPDE/blob/main/TwoAssetsExample/DeepParametricPDEExample.ipynb}
	}
	 Further practical applications  are pricing in stochastic volatility or jump-diffusion models, where Monte-Carlo methods are still state of the art for high dimensions, and examples beyond finance.

The deep parametric PDE method exhibits a natural offline-online decomposition. In the one-time offline phase, the neural network is trained. Then in the online phase, evaluating the solution for any time, state and parameter value is simply the matter of evaluating a single neural network.

\subsection{Literature Review}
There is a large research effort in medium- and high-dimensional option pricing. Classical methods include different variants of Monte-Carlo methods~\cite{giles:15, lecuyer:09}, Fourier pricing~\cite{glau:10},  sparse grid  integration~\cite{bayer:18, griebel:10, holtz:11} and low-rank approximations~\cite{glau:20} (see also references therein). Also PDE-based solvers are considered, e.g. using an operator splitting~\cite{hout:16}, via expansion~\cite{reisinger:18}, wavelets~\cite{hilber:13} and radial basis function~\cite{pettersson:08}. Monte-Carlo methods typically lack efficiency, while the other methods suffer from the curse of dimensionality in high-dimensional settings.

A possible remedy is the application of deep neural networks, which  have drastically improved the field of artificial intelligence in the recent years~\cite{lecun:15}. 
The idea to use neural networks in finance was already researched in the 1990, e.g., in~\cite{hutchinson:94, malliaris:93} based on supervised learning. 
 Also the use of shallow neural networks as a discretisation of PDEs using the Galerkin method was experimented with, e.g., in~\cite{barucci:97, meade:94}. 
A recent literature review~\cite{ruf:20} provides an overview over applications of neural networks in option pricing and hedging.

In general, the approaches can be classified as either unsupervised or supervised. 
An overview of unsupervised  PDE-based approaches  can be found in~\cite{vidales:19}.
The deep BSDE method uses a reformulation of the problem as a backward stochastic differential equation (BSDE), which is then solved by a neural network, e.g,.~\cite{beck:jentzen:19, chan:19, han:jentzen:18, hure:20}. 
A method related to the proposed deep parametric PDE method is the deep Galerkin method introduced in~\cite{sirignano:17}. Among the applications considered are the partial integro-differential equations and Hamilton-Jacobi-Bellman equations~\cite{al-aradi:19, al-aradi:18} as well as general Stokes equations~\cite{li:20:dgm}.
Physics-informed neural networks use a similar approach~\cite{chen:hesthaven:20, guofei:19, raissi:19}.
To the best of our knowledge, no approach to solve general parametric PDEs directly using a neural network has been published yet. 

An alternative approach to directly solving a PDE with neural networks is  applying supervised learning to sample data, e.g.,~\cite{liu:osterlee:19}. 
The deep Kolmogorov method recently introduced in~\cite{berner:20} applies a hybrid approach using supervised learning and SDE-based techniques. There, samples of the underlying stochastic process of a Kolmogorov PDE are used to train a neural network.

There have been developed supervised learning approaches to parametric PDE problems, unrelated to finance. 
Examples are to approximate a low-dimensional quantity of interest~\cite{khoo:20} or the whole solution mapping~\cite{bhattacharya:20, li:20:fourier_network}. 


\subsection{Main Contribution}
We summarise the main contributions of this article.
\begin{itemize}
\item We introduce the deep parametric PDE method to solve a family of PDEs with a single network training. After a one-time offline phase to train the network, the solution for different parameter values can be evaluated in milliseconds.
\item With an application to multi-asset option pricing, we show the practical use of the method. 
\item We provide a general proof of convergence which includes non-smooth data, as given for the option pricing problem.
\item To evaluate the performance of the deep parametric PDE method, we study the error in the option price and in the implied volatility for problems of up to $25$ dimensions.
\end{itemize}

The article is structured as follows. In Section~\ref{sec:dgm}, we introduce the deep parametric PDE method for parabolic problems. We specify the formulation for option pricing in the multivariate Black-Scholes model. Also, we show convergence of the deep parametric PDE method under general assumptions. Further details for the option pricing problem, i.e., the parameter dependency and reference solvers, are given in Section~\ref{sec:option_pricing_details}. 
Details regarding the implementation, including feature-scaling and hyper-parameter tuning, are described in Section~\ref{sec:implementational_details}. We investigate numerical examples with four to 25 dimensional problems in Section~\ref{sec:numerics}. Finally Section~\ref{sec:conclusion} summarises and concludes the article.
\section{The Deep Parametric PDE Method} \label{sec:dgm}

We consider a parametric parabolic PDE on the domain $(0,T)\times \Omega$, where $\Omega \subset \R^d$ with $d\in \mathbb{N}_{>0}$ is bounded with a smooth boundary. For each parameter $\mu$ in the compact parameter domain $\Param$, we solve for $u(\cdot;\mu)$, such that
\begin{subequations} \label{eq:pde_introduction}
\begin{align}
\partial_\timet u(\timet,x;\mu) + \DiffOpx u(\timet,x;\mu) &= f(\timet, x; \mu), \quad& (\timet, x)& \in \Q = (0,T)\times \Omega,\\
u(0,x;\mu) &= g(x; \mu), \quad& x&\in \Omega, \\
u(\timet, x; \mu) &= \usigma(t,x;\mu), \quad& (\timet, x)&\in\Sigma = (0,T) \times \partial\Omega.
\end{align}
\end{subequations}	
Here $\DiffOpx$ is a strongly elliptic differential operator of second order operating on the state variable $x$. It is parametrised by $\mu\in\Param$ and we assume a continuous parameter dependency.
Also the parameter dependency on $f(\mu)\in L^2(\Q)$, $g(\mu)\in L^2(\Omega)$ and $\usigma(\mu)\in L^2(0,T, H^{1/2}(\Sigma))$ is assumed to be continuous. Note that we frequently abbreviate $f(\cdot;\mu)$ as $f(\mu)$. 
Solvability and uniqueness of the solution for each parameter is given by, e.g.,~\cite[Chapter 4, Equation 15.38]{lions_v2:72}.

We consider standard Lebesgue and Sobolev spaces, as introduced in~\cite{lions_v1:72}.
$L^2$ denotes the space of square integrable functions, in the case of $L^2(\Q)$ mapping from  $\Q$ to $\R$ and for $L^2(0,T,X)$  mapping from $(0,T)$ to the Hilbert space $X$.
$H^1(\Omega)$ is the space of square-integrable functions on $\Omega$, whose first weak derivatives are also square-integrable.  Its trace space is $H^{1/2}(\partial \Omega)$.

\subsection{Overview of the Deep Parametric PDE Method}
We propose the deep parametric PDE method as an  approximation of the solution $u$ by a single neural network. After training the network, the approximate solution is given  for all times, states and parameter values. 
In order to truly exploit the given structure, we  determine the loss function $\Loss(u)$ purely by the PDE. A suitable approach is based on a least-squares formulation of the PDE. To ease the learning process, we transform the solution to an auxiliary one, which is of a similar magnitude throughout the domain. This auxiliary solution is then approximated using a deep neural network with the time, state and parameter variables as the input. 

The complete procedure is summarized as follows:
\begin{itemize}
\item In a one-time offline phase, which is the computationally expensive part of the method, we train the neural network by minimising the PDE's residuals.
\item
In the online phase, the solution to the PDE problem for any time, state and parameter value is obtained by evaluating the neural network, which is computationally fast.
\end{itemize}
Key advantage of the approach is that with a single training, an approximate solution is available for all considered PDEs simultaneously. This is in contrast to state-of-the-art deep learning approaches to solve PDEs, where one training phase yields the solution of a single PDE. Including the parameters of the problem into the neural network thus enables us to fully exploit the potential of deep learning to approximate high dimensional functions. The investment in a  computationally expensive training phase thus pays off, since it yields a fast solution for the complete family of PDEs.

\subsection{Choice of the Loss Function}
To  approximate~\eqref{eq:pde_introduction} by a neural network, we first need to define an appropriate minimisation problem. In the deep parametric PDE method, we use a least-squares formulation of the PDE. To account for the parameter-dependency, another integral over the parameter domain $\Param$ is added.

For a given function $u\colon \Q\times\Param\rightarrow \R$ of sufficient smoothness, we define the loss based on the PDE's residuals:
\begin{align}\label{eq:residual}
\Loss(u) = \LossInt(u) + \LossIC(u) + \weightBC \LossBC(u) .
\end{align}
The interior residual is defined as
\begin{align*}
\LossInt(u) =  
|\Q\times\Param|^{-1}
\int_{\Param} \int_{\Q} \left( \partial_\timet u(\timet,x;\mu) + \DiffOpx u(\timet,x;\mu) -f(\timet, x; \mu) \right)^2 ~\mathrm{d}(\timet,x) ~\mathrm{d}\mu,
\end{align*}
with $|\Q\times\Param|$ the size of the domain, and the initial residual as
\begin{align*}
\LossIC(u) = 
|\Omega\times\Param|^{-1}
\int_{\Param} \int_\Omega \left( u(0, x; \mu) - g(x; \mu) \right)^2 ~\mathrm{d}x ~\mathrm{d}\mu.
\end{align*}
The boundary residual 
\begin{align*}
\LossBC(u) = 
|\Sigma\times\Param|^{-1}
\int_{\Param} \| u( \mu)  - u_\Sigma(\mu)  \|_{H^{1/2}(\Sigma)} ^2 \mathrm{d}\mu
\end{align*}
is weighted by a factor $\weightBC\geq 0$, which we will choose as zero in our experiments for simplicity.
Unlike  approaches based on supervised learning of expensively computed samples (e.g.,~\cite{liu:osterlee:19}), no samples are required as this approach is \emph{unsupervised}.

The integrals are numerically evaluated by Monte-Carlo quadrature, which yields a similarity to mean squared error-residuals often used in machine learning:
\begin{align} \label{eq:discrete_residual}
\LossInt(u) &\approx 
\sum_{i=1}^N \Big( \partial_\timet u\big(\timet^{(i)}\! ,x^{(i)};\mu^{(i)}\big) + \DiffOpx u\big(\timet^{(i)}\!,x^{(i)};\mu^{(i)}\big) -
f\big(\timet^{(i)}\!,x^{(i)};\mu^{(i)}\big) \Big)^2 /N, \\
\LossIC(u) &\approx 
\sum_{i=1}^N \Big( u\big(0, \hat x^{(i)}; \hat \mu^{(i)}\big) - g\big(\hat x^{(i)}; \hat \mu^{(i)}\big) \Big)^2 /N,\notag
\end{align}
where $(\timet^{(i)}\!, x^{(i)}\!, \mu^{(i)}) \in \Q\times\Param$ and $(\hat x^{(i)}\!, \hat \mu^{(i)})\in \Omega \times \Param$ for $i=1,\ldots, N$ are chosen randomly with a uniform distribution.   In our experiment, we choose $N=10,000$ and observed less accurate approximations with smaller values of $N$. 
Approximating the boundary residual $\LossBC$ would be more involved. A practical approach could be to replace the $H^{1/2}(\Sigma)$ norm, by the $L^2(\Sigma)$-norm and to perform the same Monte-Carlo quadrature. As our experiments show good results without the term, we omit it for simplicity.

\subsection{Multivariate Option Pricing in the Black-Scholes model} 
We apply the deep parametric PDE method to an option pricing problem in the Black-Scholes model.
Expressing the option price in logarithmic asset variables, $u(\timet, x; \mu)$ denotes the fair price of an option at time to maturity $\timet$ for the asset prices $s_i=e^{x_i}$:
\begin{align}
u(\timet,  x; \mu) &= \operatorname{Price}(T-\timet, e^x; \mu),\notag \\
\operatorname{Price}(\physicalt, s; \mu) &= e^{- r (T-\physicalt) } \ev(G(S_T(\mu))\,| \,S_{\physicalt}(\mu)=s),
\label{eq:original_problem_setting}
\end{align}
with $d$ underlyings $S_\physicalt(\mu) = (S_\physicalt^1(\mu), \ldots, S_\physicalt^d(\mu))$ and the physical time $\physicalt = T - \timet$. $G$ denotes the payoff function at maturity and the assets $S$ are modelled by a multivariate geometric Brownian motion.

The parameter $\mu$ can describe model parameters as well as option parameters.
In our setting the parameter vector $\mu$ contains the risk-free rate of return, volatilities and correlations, each with a smooth parameter dependency. 

In the Black-Scholes model, the differential equation~\eqref{eq:pde_introduction} is homogeneous, i.e., $f(t,x;\mu)=0$ and the  operator reads
 \begin{align*}
\DiffOpx u(\timet,x;\mu) = &r  u(\timet, x; \mu) - \sum_{i=1}^d \left(r-\frac{\sigma_i^2}{2}\right)\partial_{x_i} u(\timet, x;\mu) - \sum_{i,j=1}^d  \frac{\rho_{ij} \sigma_i \sigma_j}{2}  \, \partial_{x_ix_j} u(\timet,x;\mu),
\end{align*}
with $r = r(\mu)$, $\sigma_i = \sigma_i(\mu)$ and $\rho_{ij} = \rho_{ij}(\mu)$ with $\rho_{ii} = 1$.
We choose the domain as a hypercube: $\Omega = (\xmin, \xmax)^d$. We note that the boundary of the domain exhibits less regularity than assumed in our original setting, but we do not expect any issues arising from this.

For our main experiments, we consider European basket call options with equal weights and fixed strike price $K$:
\[
g(x) = G(e^x) = \left(\frac{1}{d}\sum_{i=1}^d e^{x_i} - K\right)_+ 
= \max\left\{0,\, 
\frac{1}{d} \sum_{i=1}^d e^{x_i} - K \right\}.
\]
As typical for option pricing, the initial condition is not smooth. However, in this case it is still in $H^1(\Omega)$. Thanks to the smoothing property of parabolic PDEs, this is sufficient regularity for the approximation results shown in Section~\ref{sec:approximation}. 

For different parts of the boundary $\partial\Omega$, the boundary values $u_\Sigma$ could be in principle chosen as the average asset price or the solution of a lower-dimensional option pricing problem. Since our numerical results are well even without the term, we omit the extra computational effort which would be required.

 \subsection{Localisation of the PDE}
We can   decompose the option price in two parts: the \emph{trivial no-arbitrage bound},
 and the remaining time value of the option.
 The no-arbitrage bound equals the maximum of zero and the expected payoff of an auxiliary derivative. This  derivative shares all specifications of the basket option, despite that the owner is forced to execute at maturity.
In the case of a single asset, this bound captures the asymptotic behaviour of the option. As a consequence,  the  remaining time value   is bounded and thus well suited for its approximation by a neural network. 
 
However, the no-arbitrage bound in general is not smooth, making it unsuitable for a transformation of the PDE. Instead, we  consider a smooth approximation $\localisation \in C^\infty$. In case of put and call options, the non-smoothness stems from the maximum function, which can be approximated excellently by the softplus function.	 For European basked call option, we thus have
\begin{align}\label{eq:localisation}
\localisation(\timet,x; \mu) = \frac{1}{\lambda}\log\left(1+e^{\lambda\left( \frac{1}{d}\sum_{i=1}^d e^{x_i}/d -  K e^{-r\timet} \right)}\right),\quad \text{ for } \lambda > 0.
\end{align}
Note that for $\lambda \rightarrow \infty$, we approach the no-arbitrage bound:
\[
\lim_{\lambda\rightarrow\infty} \localisation(\timet, x; \mu) = \left( \sum_{i=1}^d e^{x_i}/d -  K e^{-r\timet} \right)_+.
\]
 The drawback when using large values of $\lambda$ is that the second derivative  can become too large. Therefore in practice, we use a medium value, $\lambda=0.1$.

In the deep parametric PDE method, we are left to approximate the \emph{residual   value}
$u(t, x;\mu) - \localisation(t, x;\mu)$
 by a  neural network.

\subsection{Neural Network}
We use a variant of highway networks~\cite{highway_networks:15} that proved successful in the approximation of PDEs in~\cite{sirignano:17}. After an initial dense layer, several gated layers are applied and finally combined to a scalar output. Denoting the input variables $(\timet, x; \mu)$ as $h^0 \in \R^{n}$ with $n=1+d+ n_\mu$, we have the first dense layer as
\[
h^1 = \psi( W^0h^0 + b^0) \in \R^m,
\]
with $W^0\in \R^{m\times n}, b^0\in \R^m$ for $m$ nodes in each layer. The activation function $\psi$ is the element-wise application of a smooth function, in our case the hyperbolic tangent. 

Then for $l=1,\ldots,L$ with $L$ the number of layers, we have
\[
h^{l+1} = (1 - g^l) \odot h^{l+1/2} + z^l \odot h^l\in \R^m,
\]
with gates $g^l$ and $z^l$ which can pass $h^l$ and the intermediate layer computation $h^{l+1/2}$.
The operator $\odot$ denotes element-wise multiplication of vectors. 
 The intermediate layer computation $h^{l+1/2}$ includes an additional gate $r^l$ which can drop the information present in the previous layer,
\[
h^{l+1/2} = \psi(U^{h,l} h^0 + W^{h,l} (h^l \odot r^l ) + b^{H,l} )\in\R^m.
\]
Each of the three gates have the same standard structure:
\begin{align*}
g^l &= \psi(U^{g,l} h^0 + W^{g,l} h^l + b^{g,l} ),\\
r^l &= \psi(U^{r,l} h^0 + W^{r,l} h^l + b^{r,l} ),\\
z^l &= \psi(U^{z,l} h^0 + W^{z,l} h^l + b^{z,l} ).
\end{align*}
In all cases the weights and biases are of the same size $U^{*,l} \in \R^{m\times n}$, $W^{*,l} \in \R^{m\times m}$ and $b^{*,l}\in \R^m$ and are trainable parameters. 

A final dense layer and adding the localisation yields the trial functions for the option price:
\[
\uDNN^\theta(\timet,x;\mu) =  
 \localisation(\timet,x;\mu) + W^{L+1} h^{L+1} + b^{L+1},
\]
with $W^{L+1} \in \R^{1\times m}$, $b^{L+1} \in \R$ and the vector $\theta$ collecting all  trainable \emph{network parameters} $W^*, U^*, b^*$.
We seek the  network parameters  $\hat \theta$ which minimise the loss:
\begin{align}\label{eq:minimisation_of_loss}
\hat \theta = \operatorname*{arg\,min}_{\theta} \Loss(\uDNN^\theta).
\end{align}
This defines the deep parametric PDE solution:
\[
u(\timet,x;\mu) \approx \uDGM(\timet,x;\mu) = \uDNN^{\widehat\theta}(\timet,x;\mu).
\]
The resulting function $\uDGM$ is a  single neural network that approximates the true solution of the PDE at all times, states and parameter values simultaneously. With one training, we have solved the whole family of PDEs.

\subsection{Approximation Properties}
\label{sec:approximation}
In~\cite{sirignano:17}, convergence of the deep Galerkin method is shown for fixed parameters and with  smooth solutions. In option pricing, we often deal with non-smooth initial conditions, so we adapt the proof to this case.
We assume that the optimisation problem~\eqref{eq:minimisation_of_loss} is solved correctly. Research on convergence of numerical optimisers for neural networks is an emerging field, having shown first results, see e.g.,~\cite{bercher:jentzen:20}.
We expect future results to allow us to push our analysis further in this direction.

We show convergence in two steps: First, we establish the existence of  a neural network that minimises the loss function up to any prescribed accuracy. Then, we prove that any approximation with a loss function smaller than $\varepsilon > 0$ has an $L^2$-error less than $c \varepsilon$.

We denote the set of single layer-neural networks as
\[
\boldsymbol{\mathcal{C}} = \bigcup_{m=1}^\infty \{ \zeta: \zeta(\timet, x; \mu) = \sum_{j=1}^m \beta_j \psi(w^\timet_j \timet + {w^x_j}^\top x + {w^\mu_j}^\top \mu + b_j)\} \subset C^\infty(\Q\times\Param).
\]
Our notation of function spaces follows~\cite{lions_v2:72, lions_v1:72}, where
$
H^{r, s}(\Q) = L^2(0,T, H^r(\Q))\cap H^s(0,T,L^2(\Q))
$
and 
$H^r$ is the  Sobolev space of order $r\in\R$, both being defined in~\cite[Chapter 1]{lions_v1:72}.
We also consider by $C$ continuous functions and by $C^\infty$ smooth functions. 

We consider the case with active boundary conditions, i.e., $\weightBC=1$. The boundary conditions need to be sufficiently regular. An example would be a sequence of lower-dimensional Black-Scholes solutions. Regularity in this case can be shown by iterating this process up to the case of a single asset.
Also the initial and boundary conditions are required to continuously intersect, which results in the compatibility condition
$\left. g(\mu) \right|_{ \partial\Omega} =
\left. \usigma(\mu) \right|_{\{0\}\times \partial\Omega} 
$.

\begin{theorem} \label{thm:existence_of_approximation}
Let $g\in C(\Param, H^1(\Omega))$ and $\usigma\in  C(\Param, H^{3/2,1}(\Sigma))$ fulfil the compatibility condition $\left. g(\mu) \right|_{ \partial\Omega} =
\left. \usigma(\mu) \right|_{\{0\}\times \partial\Omega}$ for each parameter $\mu\in\Param$ and let $f\in C(\Param, L^2(\Q))$. Also let the parameter-dependency of the differential operator $\DiffOpx$ be continuous in the operator norm from $H^2(\Omega)$ to $L^2(\Omega)$.

Then, for any $\varepsilon > 0$, there exists a function $\uDNN \in \boldsymbol{\mathcal{C}}$,  such that $\Loss(\uDNN) < \varepsilon$.
\end{theorem}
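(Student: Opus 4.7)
My plan is to (1) establish joint regularity of the true solution $u$ over $\Q\times\Param$, (2) approximate $u$ by a shallow network in a norm strong enough to dominate every term of $\Loss$, and (3) transfer the approximation error into each residual of the loss.

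For step (1), classical parabolic regularity theory (Lions--Magenes, \cite[Chapter 4]{lions_v2:72}) applied for fixed $\mu$ to the data $f(\mu)\in L^2(\Q)$, $g(\mu)\in H^1(\Omega)$, $\usigma(\mu)\in H^{3/2,1}(\Sigma)$ satisfying the stated compatibility condition yields $u(\cdot;\mu)\in H^{2,1}(\Q)$, with norm bounded linearly in these data norms. Combined with the continuity-in-$\mu$ hypotheses on $f,g,\usigma$ and $\DiffOpx$, plus standard continuous data-dependence of parabolic solutions, this gives $u\in C(\Param,H^{2,1}(\Q))$, and compactness of $\Param$ upgrades this to $u\in L^2(\Param;H^{2,1}(\Q))$ with a uniform bound. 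For step (2) I would work in the joint norm
\[
\|v\|_*^2 := \|v\|_{L^2(\Param;H^{2,1}(\Q))}^2 + \|v(0,\cdot;\cdot)\|_{L^2(\Omega\times\Param)}^2 + \|v|_\Sigma\|_{L^2(\Param;H^{1/2}(\Sigma))}^2,
\]
which, term by term, controls $\LossInt$, $\LossIC$ and $\weightBC\LossBC$. Density of $\boldsymbol{\mathcal{C}}$ in $\|\cdot\|_*$ I obtain in two substeps: first approximate $u$ in $\|\cdot\|_*$ by a smooth function $\tilde u\in C^\infty(\overline{\Q\times\Param})$ (density of smooth functions in the underlying Sobolev--trace space over a bounded domain), then use the classical universal approximation theorem of Hornik/Pinkus for smooth non-polynomial activations (the hyperbolic tangent qualifies) to approximate $\tilde u$ in $C^{2,1}(\overline{\Q\times\Param})$ by some $\uDNN\in\boldsymbol{\mathcal{C}}$; the continuous embedding of $C^{2,1}$ into the $\|\cdot\|_*$-space on the bounded domain closes the loop.

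For step (3), since $u$ solves the PDE in the strong sense, the interior residual may be rewritten as
\begin{align*}
\LossInt(\uDNN) \le C\int_\Param \|\partial_\timet(\uDNN-u)+\DiffOpx(\uDNN-u)\|_{L^2(\Q)}^2\,\mathrm{d}\mu \le C'\|\uDNN-u\|_{L^2(\Param;H^{2,1}(\Q))}^2,
\end{align*}
using boundedness of $\DiffOpx\colon H^2(\Omega)\to L^2(\Omega)$ uniformly in $\mu$ (continuity plus compactness). The initial and boundary residuals are already measured directly by the remaining two pieces of $\|\cdot\|_*$. Choosing the approximation tolerance small enough then yields $\Loss(\uDNN)<\varepsilon$.

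The main obstacle I anticipate is reconciling the low regularity of the initial datum $g\in H^1$ with the need for enough regularity of $u$ to bound the second-derivative terms in $\LossInt$. The compatibility condition at the parabolic corner $\{0\}\times\partial\Omega$ is precisely what prevents a corner singularity; without it, $u$ would not attain $H^{2,1}(\Q)$ up to $t=0$, and the shallow-network approximation would fail to control $\LossInt$. A secondary balancing act is to pick a norm simultaneously strong enough to dominate every residual in $\Loss$ yet weak enough that shallow networks with smooth activation remain dense in it.
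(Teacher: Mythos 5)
Your proposal is correct and follows essentially the same route as the paper's proof: establish $u\in L^2(\Param;H^{2,1}(\Q))$ via Lions--Magenes regularity plus uniformity over the compact parameter domain, pass through a smooth intermediate approximant by density, invoke Hornik's universal approximation theorem in a norm controlling two spatial derivatives, and rewrite the interior residual as $\partial_\timet(\uDNN-u)+\DiffOpx(\uDNN-u)$ while handling the initial and boundary residuals by trace estimates. The only cosmetic differences are that you bundle the three residual bounds into a single norm $\|\cdot\|_*$ up front and route the network approximation through $C^{2,1}$ rather than directly through $H^2(\Q\times\Param)$.
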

\begin{proof}
We use $H^{2,1}(\Q)$-regularity of parabolic PDEs as shown in~\cite[Chapter 4, Equation 15.39]{lions_v2:72}. While this result holds for each $\mu\in\Param$, the analyticity of the inversion of a  linear operator (see e.g.,~\cite[Corollary A.2]{eldering:13}) shows integrability on the compact domain $\Param$, i.e., $u\in L^2(\Param, H^{2,1}(\Q))$.
Although this is sufficient regularity to show the desired approximation property, we could not find a direct results on the approximability of functions in anisotropic Sobolev spaces by neural networks.  Therefore, we first approximate the solution by a smooth function, which we then approximate by a neural network. 

Using a density argument, we can show that there exists $\usmooth \in C^{\infty}(\Q\times\Param)$, such that 
\[
\| u - \usmooth \|_{L^2(\Param, H^{2,1}(\Q))}^2 \leq \varepsilon.
\]
Now we can use~\cite[Theorem 3]{hornik:91}, which shows that there exists a neural network $\uDNN \in \boldsymbol{\mathcal{C}}$ with
\[
\| \usmooth - \uDNN \|_{H^2(\Q\times\Param)}^2 \leq \varepsilon.
\]
This implies a close approximation also of $u$: 
\begin{align*}
\| u - \uDNN \|_{L^2(\Param, H^{2,1}(\Q))}^2  &\leq 
c\| u - \usmooth \|_{L^2(\Param, H^{2,1}(\Q))}^2 +
c\| \uDNN - \usmooth \|_{L^2(\Param, H^{2,1}(\Q))}^2 \\&\leq 
c\varepsilon.
\end{align*}
Then the interior residual can be rewritten and estimated as
\begin{align*}
\|\partial_t \uDNN + \DiffOpx \uDNN - f \|_{L^2(\Q\times\Param)}^2 =
\|\partial_t (\uDNN - u) + \DiffOpx (\uDNN - u) \|_{L^2(\Q\times\Param) }^2\leq c \varepsilon.
\end{align*}
Trace estimates  imply
\begin{align*}
\| \usigma - \uDNN \|_{L^2(\Param, H^{1/2}(\Sigma))}^2 &=
\| u - \uDNN \|_{L^2(\Param, H^{1/2}(\Sigma))}^2 \leq c \varepsilon, \\
\| g - \uDNN \|_{L^2(\{0\}\times\Omega\times\Param)}^2&=
\| u - \uDNN \|_{L^2(\{0\}\times\Omega\times\Param)}^2 \leq c \varepsilon,
\end{align*}
which concludes $\Loss(\uDNN) \leq c \varepsilon$.
\end{proof}

In the second step we show that for any smooth function with a small loss, the $L^2$-error is bounded. 
The proof is fairly standard and relies on the stability of the PDE, which holds uniformly in~$\Param$.

\begin{theorem}\label{thm:stability}
Let the assumptions of Theorem~\ref{thm:existence_of_approximation} hold. Additionally let the parameter-dependency of the differential operator $\DiffOpx$ be continuous in the operator norm from $H^1(\Omega)$ to $H^{-1}(\Omega)$.

Then, there exists a constant $c<\infty$, such that for all $\uDNN\in C^{\infty}(\Q\times\Param)$ it holds
\[
\|u - \uDNN\|_{L^2(\Q\times\Param)}^2 \leq c \, \Loss(\uDNN).
\]
\end{theorem}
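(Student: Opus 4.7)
The plan is to run a standard parabolic energy estimate on the error $e := \uDNN - u$, where $u$ denotes the exact parametric solution of~\eqref{eq:pde_introduction}. By linearity, for each $\mu\in\Param$ the function $e(\cdot,\cdot;\mu)$ solves an inhomogeneous initial-boundary value problem whose three data pieces are precisely the pointwise residuals
\begin{align*}
r_{\mathrm{int}}(\mu) &:= \partial_\timet \uDNN + \DiffOpx \uDNN - f(\mu) && \text{in } \Q,\\
r_{\mathrm{IC}}(\mu) &:= \uDNN(0,\cdot;\mu) - g(\mu) && \text{in } \Omega,\\
r_{\mathrm{BC}}(\mu) &:= \uDNN|_\Sigma - \usigma(\mu) && \text{on } \Sigma,
\end{align*}
whose squared norms, after integration in $\mu$, are exactly what $\LossInt$, $\LossIC$, $\LossBC$ measure.

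First I would establish the pointwise-in-$\mu$ stability estimate
\[
\|e(\cdot,\cdot;\mu)\|_{L^2(\Q)}^2 \leq c(\mu)\Bigl( \|r_{\mathrm{int}}(\mu)\|_{L^2(\Q)}^2 + \|r_{\mathrm{IC}}(\mu)\|_{L^2(\Omega)}^2 + \|r_{\mathrm{BC}}(\mu)\|_{H^{1/2}(\Sigma)}^2\Bigr).
\]
The standard route is to lift the inhomogeneous boundary data by a bounded right inverse of the trace operator, subtract the lift to obtain a function $\tilde e$ with zero boundary data, and then apply the classical energy method (test with $\tilde e$, use strong ellipticity of $\DiffOpx(\mu)$, and close via Gronwall's inequality). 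Control of the lift follows from the $H^{3/2,1}(\Sigma)$ regularity of $\usigma$ inherited from the hypotheses of Theorem~\ref{thm:existence_of_approximation}, and the $H^{1/2}(\Sigma)$ norm appearing on the right-hand side is exactly the one used in $\LossBC$.

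The main obstacle is upgrading the $\mu$-dependent constant $c(\mu)$ into one that is uniform over $\Param$. Here I would invoke the additional hypothesis of the theorem: continuity of $\mu\mapsto\DiffOpx$ from $\Param$ into the operator norm $H^1(\Omega)\to H^{-1}(\Omega)$. This transfers to continuity of the coercivity and the boundedness constants of $\DiffOpx(\mu)$, and since these are the only quantities entering $c(\mu)$ in the energy argument, compactness of $\Param$ then yields $c := \sup_{\mu\in\Param} c(\mu) < \infty$.

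Finally, integrating the pointwise bound over $\Param$ and absorbing the fixed normalising volumes $|\Q\times\Param|$, $|\Omega\times\Param|$, $|\Sigma\times\Param|$ into $c$ identifies the three terms on the right-hand side as $\LossInt(\uDNN)$, $\LossIC(\uDNN)$, $\LossBC(\uDNN)$, and hence as $\Loss(\uDNN)$ in the regime $\weightBC = 1$, yielding $\|u - \uDNN\|_{L^2(\Q\times\Param)}^2 \leq c\,\Loss(\uDNN)$ as claimed.
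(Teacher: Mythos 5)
Your overall architecture coincides with the paper's: a pointwise-in-$\mu$ stability estimate for the error equation with the three residuals as data, a uniform constant obtained from the parameter-continuity of $\DiffOpx$ together with compactness of $\Param$, and a final integration over $\Param$. The uniformity step in particular is fine (the paper phrases it as analyticity of operator inversion; your coercivity/boundedness-constants argument is equivalent and arguably more transparent).

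The gap is in how you propose to obtain the pointwise estimate. You want to lift the boundary residual $r_{\mathrm{BC}}(\mu)=\uDNN|_\Sigma-\usigma(\mu)$ by a right inverse of the trace operator, subtract the lift $E$, and run the energy method on $\tilde e = e - E$. But the transformed equation acquires the source $-\partial_\timet E - \DiffOpx E$, so the energy/Gronwall argument needs $\partial_\timet E \in L^2(0,T;H^{-1}(\Omega))$, i.e.\ some fractional time regularity of the boundary residual. The loss only controls $\| r_{\mathrm{BC}}(\mu)\|_{H^{1/2}(\Sigma)}^2$, with no time regularity, so the lift cannot be controlled by the quantity that $\LossBC$ actually measures. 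Your own remark that ``control of the lift follows from the $H^{3/2,1}(\Sigma)$ regularity of $\usigma$'' is the symptom: that bounds the lift by a norm of the \emph{data} $\usigma$, not of the \emph{residual}, so the resulting inequality would contain terms not dominated by $\Loss(\uDNN)$ and the claimed bound $\|u-\uDNN\|_{L^2(\Q\times\Param)}^2\le c\,\Loss(\uDNN)$ would not follow. The paper circumvents this by invoking the transposition (very weak solution) a priori estimate for non-homogeneous parabolic problems~\cite[Chapter 4, Equation 15.38]{lions_v2:72}, which bounds $\|e(\mu)\|_{L^2(\Q)}$ directly by $\|r(\mu)\|_{L^2(\Q)}$, $\|e(\mu)\|_{L^2(\{0\}\times\Omega)}$ and $\|e(\mu)\|_{H^{1/2}(\Sigma)}$ --- exactly the three norms appearing in the loss. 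Replace your lifting-plus-Gronwall step by that duality estimate (or prove an analogue by testing against adjoint solutions) and the rest of your argument goes through.
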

\begin{proof}
First, we rewrite the loss function as an average over the parameter space
\begin{align*}
\Loss(\uDNN) &= 
\int_\Param h(\mu) ~\mathrm{d}\mu, \text{ where}
\\
h(\mu) &=\| \uDNN(\mu)  + \DiffOpx \uDNN(\mu) -f(\mu) \|_{L^2(\Q)}^2 +
\| \uDNN(\mu) - u(\mu) \|_{L^2(\{0\}\times\Omega )}^2 \\&\qquad+
\| \uDNN(\mu) - u(\mu) \|_{ H^{1/2}(\Sigma)}^2.
\end{align*}
Note that as $u(\mu)\in H^{2,1}(\Q)$, we have sufficient regularity to replace $g(\mu)$ and $\usigma(\mu)$ by $u(\mu)$ in these integrals.

We consider the residual equation for the error $e(\mu) = u(\mu)-\uDNN(\mu)$:
\begin{align*}
\partial_\timet e(\mu) + \DiffOpx e(\mu) &= r(\mu), \quad \text{ on } \Q,\\
e(\mu) &= u(\mu) - \uDNN(\mu), \quad \text{ on } \{0\}\times \Omega,\\
e(\mu) &= u(\mu) - \uDNN(\mu), \quad \text{ on } \Sigma,
\end{align*}
where $r(\mu) = f(\mu) - \partial_\timet \uDNN(\mu) - \DiffOpx \uDNN(\mu)$.
Stability of parabolic PDEs as shown in~\cite[Chapter 4, Equation 15.38]{lions_v2:72} yields
 \begin{align*}
\| e(\mu) \|_{L^2(\Q)}^2& \leq
c\| r(\mu) \|_{L^2(\Q)}^2 + c\| u(\mu) - \uDNN(\mu) \|_{H^{1/2}(\Gamma)}^2 \\&\qquad+ c\|u(\mu) - \uDNN(\mu)\|_{L^2(\{0\}\times\Omega)}^2 
= c h(\mu),
\end{align*}
where we have used $\| e(\mu) \|_{L^2(\Q)}^2 \leq
\| e(\mu) \|_{H^{1,0}(\Q)}^2$ and 
$\| r(\mu) \|_{H^{-1,0}(\Q)}^2 \leq 
\| r(\mu) \|_{L^2(\Q)}^2$.
Again, as the inversion of a linear operator is analytic, the constant is bounded over $\Param$ and the right hand side is thus integrable, which yields:
\begin{align*}
\| u - \uDNN \|_{L^2(\Q\times\Param)}^2 = \int_\Param \| u(\mu) - \uDNN(\mu) \|_{L^2(\Q)}^2 \,\mathrm{d}\mu \leq c\! \int_\Param h(\mu) \,\mathrm{d}\mu = c\, \Loss(\uDNN).
\end{align*}
\end{proof}
\section{Details Regarding Option Pricing} \label{sec:option_pricing_details}
In this section, we provide the precise parametrisation of the option pricing problem and the evaluation of reference pricers. 

\subsection{Parametrisation of the Option Pricing Problem}
In this section, we discuss the dependence of the Black-Scholes model on the vector of parameters $\mu$. The potential parameters of   basket  options in the Black Scholes model are the strike price $K$, the risk-free rate of return $r$, the volatilities $\sigma_i$ and the correlations $\rho_{ij}$. 
We note that by rescaling the asset prices, we can easily transform the problem into an option pricing problem with a fixed strike price $K$. For this reason, we do not consider the strike price as a parameter, but fix it at $100$. 
The risk-free rate of return $r$ and the volatilities $\sigma_i$ are each an entry of the parameter vector. 

For $d>2$ parametrising the correlation matrix $(\rho_{ij})_{i,j}$ is a non-trivial task as the resulting covariance matrix has to be symmetric and positive semi-definite. A naive approach would be to parametrise the covariance matrix based on its Cholesky factors, leading $d(d-1)/2$ parameters, which are difficult to interpret.  
Instead, we consider a parametrised model based on the   practical approach   suggested in~\cite{doust:10}, where a valid correlation matrix is computed from pairwise correlations. The inputs are $(d-1)$ independent pairwise correlations $\hat\rho_i = \rho_{i,i+1}$ and the missing entries are calculated by successive products: $\rho_{ij} = \rho_{ji} = \prod_{k=i}^{j-1} \hat\rho_k$, for $j>i$.
Positive definiteness for $\hat\rho_i\in(-1,1)$ is shown in~\cite{doust:10} by providing a Cholesky decomposition. 
Even for noisy or polluted  estimates, this approach yields a valid correlation matrix. 

In summary, the parameter vector is given as 
\[
\mu = (r, \sigma_1, \ldots, \sigma_d, \hat\rho_1, \ldots, \hat\rho_{d-1}),
\]
which yields a smooth parameter-dependency of the pricing problem~\eqref{eq:pde_introduction} in the sense of Theorems~\ref{thm:existence_of_approximation} and~\ref{thm:stability}.
We note that for $\sigma_i > 0$ and $\hat\rho_i \in (-1,1)$, the differential operator $\DiffOpx$ is strongly elliptic in $x$, thus we assume
\[
\Param \subset \R\times (0, \infty)^d \times (-1,1)^{d-1}.
\]
With these $n_\mu = 2d$ parameters, the overall dimension for a European basket call option with $d$ underlyings is $n=3d+1$. 

\subsection{Error Evaluation with the Implied Volatility for Basket Options}\label{sec:implied_volatility}
For single-asset options, the implied volatility is the standard measure to compare prices and accuracies over a range of different products. 
There is no unique extension of the concept to a multivariate case. Here, we propose a natural formulation of the implied volatility for basket options. Each arbitrage-free option price is mapped to a unique volatility and computationally it boils down to the univariate case. 

Given the   price $c$ (either the exact price $u(\timet, x; \mu)$ or the approximated one $\uDGM(\timet, x; \mu)$), we define the implied volatility as $\sigmaiv >0$, such that
\[
\BS\left(\timet, \sum_{i=1}^d e^{x_i}/d, r, \sigmaiv, K\right) = c.
\]
A value for the implied volatility can be computed for any value in between the two trivial no-arbitrage bounds $c > c_{\rm lb} = \left( \sum_{i=1}^d e^{x_i}/d - K e^{-r\timet}\right)_+$  and 
$c < c_{\rm ub} = \sum_{i=1}^d e^{x_i}/d$.
The implied volatility is a good measure for the relative accuracy, as its  values are of a similar magnitude over all sizes of the asset prices. The proposed extension inherits this property.

\subsection{Reference Pricers}
\label{sec:reference_pricer}
To evaluate the performance of the deep parametric PDE method, we need to compare it to alternative pricers. These reference pricers may be more expensive to evaluate for many parameters, as they only serve for comparison. 

While for vanilla European basket options, the Black-Scholes formula 
\begin{align}\label{eq:bs}
c(\timet, x; \mu) = \BS(\timet, e^x, r, \sigma, K) &= \Phi(d_1) e^x - \Phi(d_2) K e^{-r \timet},
\end{align}
with $d_1 = \frac{1}{\sigma \sqrt{\timet}} \left(x-\ln(K) + r\timet + \frac{\sigma^2 \timet}{2}\right)$ and $d_2 = d_1 - \sigma \sqrt{\timet}$,
provides an explicit option price, this is no longer available for basket options.
We present a reference pricer for basket options as well as an academic example of a basket option with an explicit solution. 

\subsubsection{Smoothing the Payoff of European Basket Call Options}
We can evaluate the option price by integrating a smoothened payoff, as developed in~\cite{bayer:18} and extended in~\cite{poetz:20}. After a variable transformation, the $d$-dimensional problem of computing the option price is split in a one-dimensional and a smooth $(d-1)$-dimensional  problem. The first part can be solved precisely and the second part is solved using Gau\ss{}-Hermite quadrature. 
For convenience of the reader, we recapitulate the key steps.

Decomposing the covariance matrix as outlined in~\cite[Lemma 3.1]{bayer:18} yields $\lambda_i$ and $(v_{i,j})_{ij}$, such that for independent $Y_i\in \N(0,\lambda_i^2)$ the stochastic process of the 
logarithmic prices is
$
\log( S_T^i(\mu) ) = x_i + ( r-\sigma_i^2/2)\timet + Y_1 + \sum_{j=2}^d v_{i,j} Y_j,
$
with $x_i$ the logarithmic asset price at time-to-maturity $\timet$. 
Solving a conditional expectation for $Y_1$ given $Y_2, \ldots, Y_d$ first, yields the option price as a $(d-1)$-dimensional problem with a smooth payoff function:
\[
c(\timet, x; \mu) = \ev \left( \BS(1, h(Y_2,\ldots, Y_d), 0, \lambda_1, e^{-r\timet}K)\right),
\]
where 
$
h(Y_2, \ldots, Y_d) = 
\frac{1}{d}\sum_{i=1}^d e^{x_i-
\sigma^2\timet/2 } e^{\sum_{j=2}^d v_{i,j} Y_j}
$.
As the  function $h$ is smooth and the dimension is reduced by one, we have a simpler problem to solve than~\eqref{eq:original_problem_setting}. Particularly a Gau\ss{}-Hermite quadrature as proposed in~\cite{poetz:20} suits well and is used here as a reference pricer. 
Because still a high-dimensional integral needs to be computed for each call, we only use it to validate our approach and do not propose it as an alternative parametric solver.

\subsubsection{Special Cases With an Explicit Solution} \label{sec:geometric_payoff}
Inspired by~\cite{sirignano:17}, we consider basket options with a payoff based on the geometric mean, which allows for an analytical solution of the parametric option price. Instead of the algebraic mean, the geometric mean $(\prod_{i=1}^d e^{x_i})^{1/d} $  enters the payoff:
\begin{align}\label{eq:geometric_payoff}
g(x)  = \left( e^{\sum_{i=1}^dx_i/d}  - K\right)_+.
\end{align}
The geometric mean of a multivariate geometric Brownian motion 
is a univariate stochastic process of the same type.
Thus the high-dimensional problem is reduced to a single-asset pricing problem
with a dividend. 
For further simplicity, we consider underlyings of equal correlations and volatilities in which case the total dimension of the parametric problem is $n=d+4$. 
The solution in this special case is given as 
\begin{gather}\label{eq:geometric_solution}
u(x, \timet; \mu)  = N(d_1) e^{-q\timet} e^{\bar x} - N(d_2) Ke^{-r\timet}, 
\end{gather}
with $  d_1  = \frac{1}{\bar \sigma\sqrt{\timet}}\left( \bar x  -\ln(K) + r \timet -q \timet + \frac{\bar \sigma^2}{2}\timet\right) $, 
$d_2 = d_1 - \bar \sigma\sqrt{\timet}$ and 
 $\bar x = \sum_{i=1}^d x_i/d$, 
$\bar\sigma^2 = \sigma^2/d (1 + (d-1) \rho)$ and $q =  \sigma^2/2 - \bar \sigma^2/2$. 
We   also adapt the extension of the implied volatility for basket options as described
in Section~\ref{sec:implied_volatility} to this case.

\section{Implementational Details}\label{sec:implementational_details}
We implement the proposed deep parametric PDE   method
	 in \texttt{python} using the machine-learning package \texttt{keras}~\cite{keras}, based on \texttt{tensorflow}~\cite{tensorflow} (the versions used are \texttt{python 3.6.3}, \texttt{keras 2.4.0} and \texttt{tensorflow 2.3.0}).
The models are trained on the GPU nodes of the high performance computing cluster at Queen Mary, Apocrita~\cite{apocrita}. 
The used GPUs are    Nvidia Tesla K80 and V100. They are accessed using \texttt{CUDA 10.1.243}.
The trained neural networks are evaluated on an end-user device, a 2015 MacBook Pro with a $2.7$GHz dual-core CPU and $8$GB RAM.
In the following subsections, we provide details regarding the algorithms to optimise trainable parameters as well as hyper-parameters.

\subsection{Minimising the Loss}
The minimisation of the loss function with respect to the weights of the neural network is a highly non-linear non-convex optimisation problem. As such we solve it iteratively by a variant of a gradient-descent method. 
We use an early stopping criterion to determine when the numerical optimiser has converged. Where the loss has not improved after 50 epochs (with 10 batches each), optimisation is stopped and the weights with the minimal observed loss are used.
For a good generalisation of the solution, we re-sample the points used in the discrete loss~\eqref{eq:discrete_residual} after each batch. This results in a good approximation on average of the loss function by the discrete loss.

A common problem when optimising neural networks are vanishing gradients, where a node saturates and the derivative with respect to its weights becomes numerically zero. To avoid this from happening, the initial value of the weights is of a high importance~\cite{glorot:10}.
Most standard initialisations of the weights in neural networks assume input to be  normalised. An input of the order of $100$, as for the asset prices, would lead to vanishing gradients and thus extremely slow training. Thus, we linearly transform the computational domain to $[-1, 1]$ for time, asset values and each parameter. 
All weights are initialised using the Glorot normal initializer~\cite{glorot:10}.

\subsection{Hyper-parameter Optimisation}
In addition to the weights and biases collected in the vector $\theta$ that will be optimised, neural networks have hyper-parameters, such as the number of layers and the number of nodes per layer. When these are chosen arbitrarily, results cannot be expected to be optimal. 
We use \texttt{keras-tuner}~\cite{kerastuner} for a hyper-parameter optimisation based on a random search. We set the limits for the neural network to be between $2$ and $10$ layers with  $10$ to $110$ nodes each. Different optimisers are tested, and the learning rate (i.e., the step-size of the optimiser) is set to be between $0.01$ and $0.0001$.

While the precise values differed in different situations, we found that $9$ layers with $90$ nodes each provide good results in all of our cases. As the optimiser we choose Adam~\cite{AdamOpt} with a learning rate of $0.001$.

\section{Numerical Examples}\label{sec:numerics}
In this section, we  consider different option pricing problems with one to eight underlyings. We look at the error in the price and the implied volatility in different situations, but also look into the convergence of the Adam optimiser.

If not stated otherwise, we consider volatility values   between $10\%$ and $30\%$, pairwise correlations between $0.2$ and $0.8$ and risk-free rates of return from $10\%$ up to $30\%$.  The maximal time to maturity considered is $4$ in all cases, with the minimal time of interest being $0.5$. 
The strike price is fixed to $100$, which is no limitation thanks to re-scaling properties. The values of interest for the underlying assets range from $25$ to $150$. 		
To limit the truncation error, we consider a larger computational domain with asset prices between $21$ and $460$. For the parameters, the computational domain and the domain of interest   coincide. 
Unless stated otherwise, we use the default parameter values $r=20\%$, $\sigma_i=20\%$ and $\hat\rho_{i}=0.5$. 

In all examples, we use the same network architecture with $9$ layers and $90$ nodes per layer. This has the key advantage for applications, that no extra hyper-parameter optimisation is required to apply the method to a new situation. The numerical results confirm the required robustness.

We start with the univariate case in order to validate the method against the explicit solution. Then, we consider basket options in different scenarios and compare it to the reference pricer introduced in Section~\ref{sec:reference_pricer} and alternative machine learning methods.  
To further explore the versatility of our proposed method, we also consider the geometric payoff and briefly discuss the computation of Greeks. 
\subsection{Single-Assets Call Options}
As an initial study, we consider the case of European call options with one underlying as we can compare the solution to the Black-Scholes formula~\eqref{eq:bs}.

\begin{figure}[htbp]
\begin{minipage}{0.29\textwidth}
\begin{center}
\includegraphics[width=\textwidth]{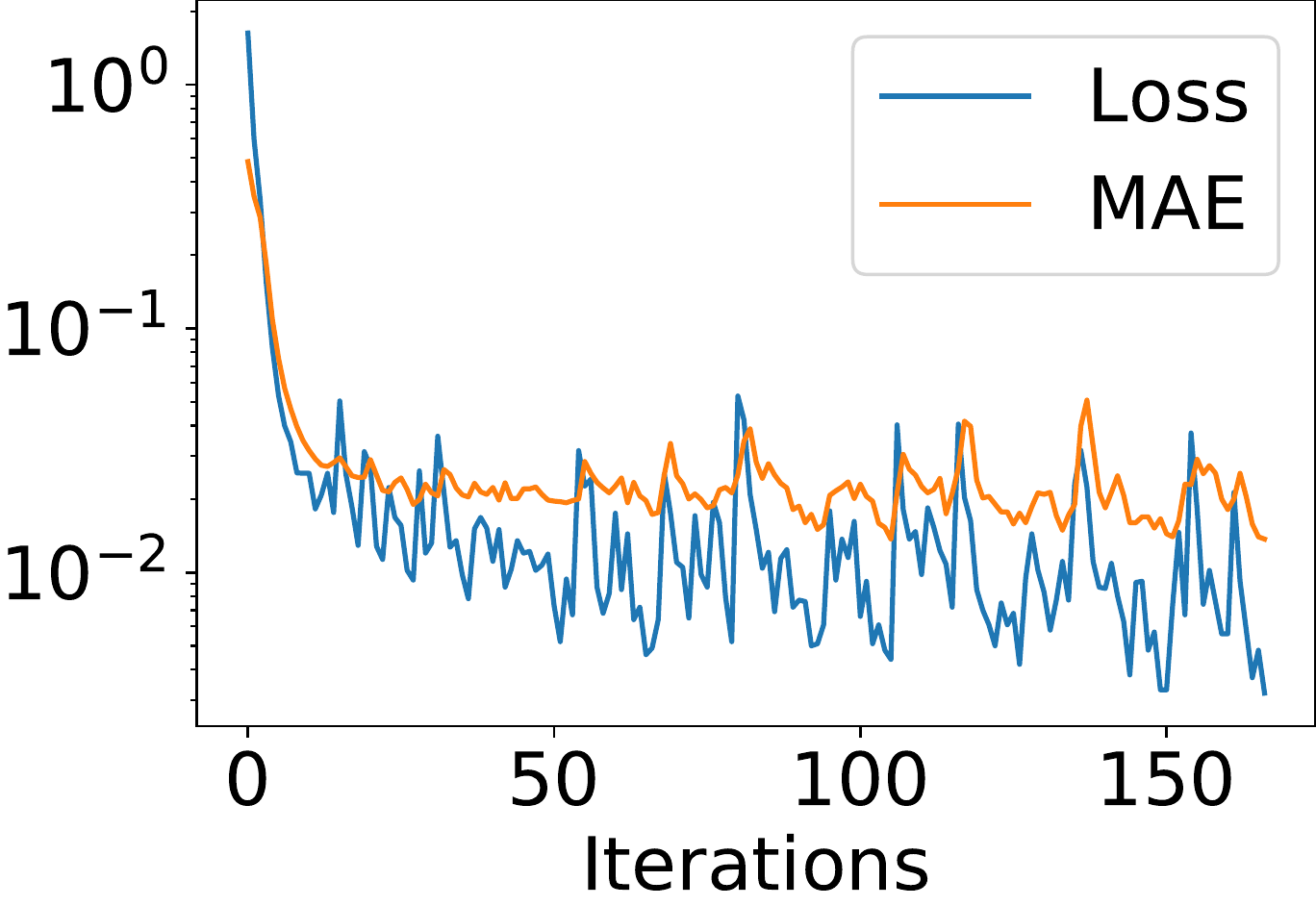}
\caption{Convergence of loss and mean absolute error (MAE) over iteration steps of the optimisation. Only the convergence up to the optimal iteration is shown. $50$ more epochs were computed but resulted in no improvement of the loss and are not shown. }
\label{fig:convergence_1d}
\end{center}
\end{minipage}
\hspace{1em}
\begin{minipage}{0.67\textwidth}
\begin{center}
\includegraphics[width=.48\textwidth]{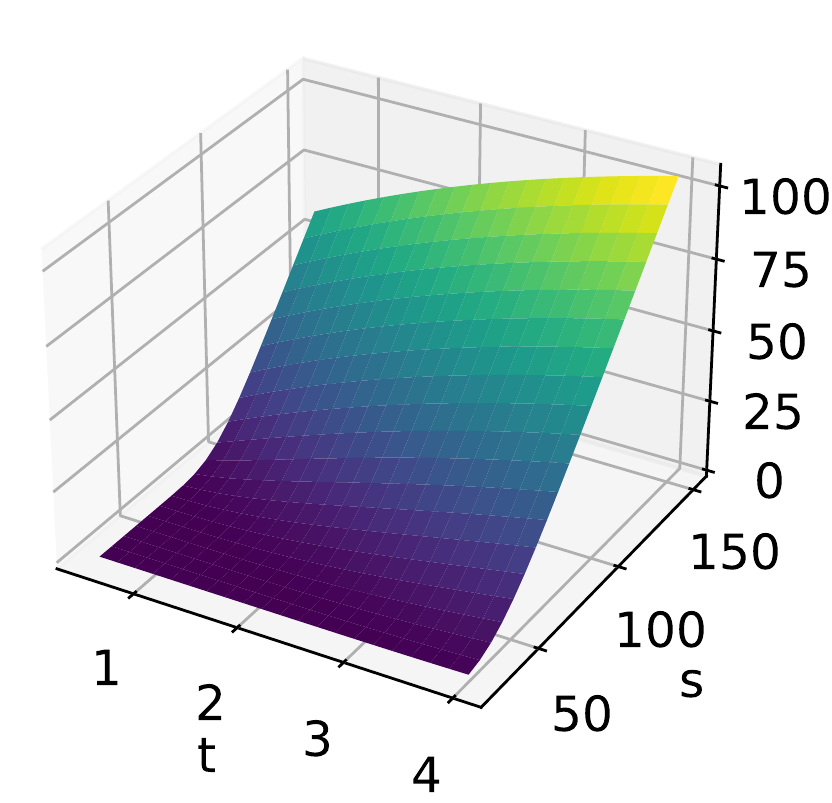} \hspace{0.1em}
\includegraphics[width=.48\textwidth]{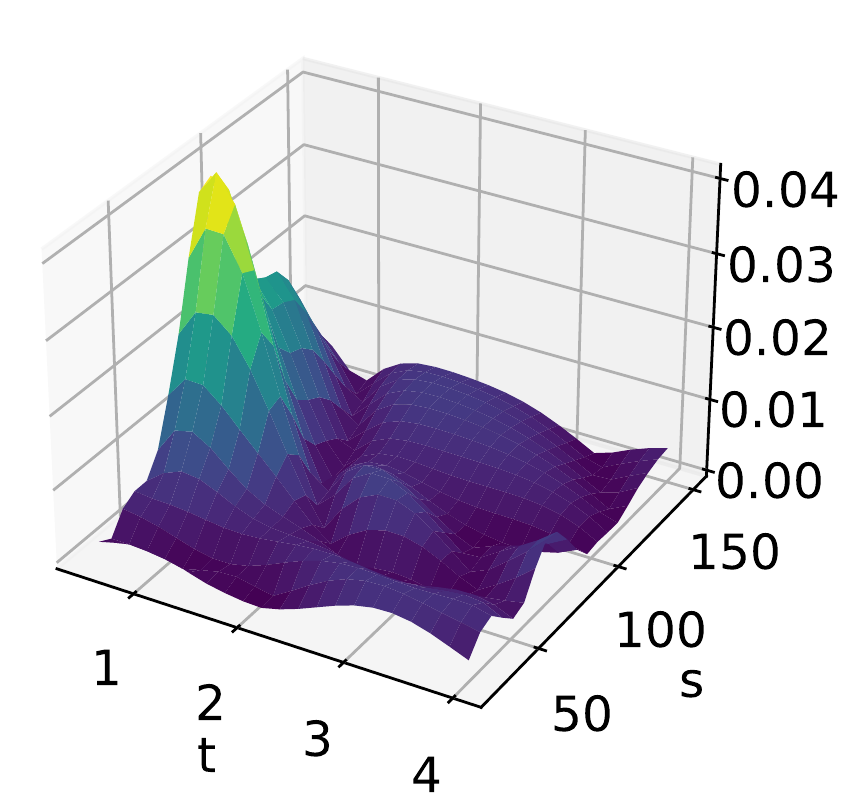}\\
\includegraphics[width=.48\textwidth]{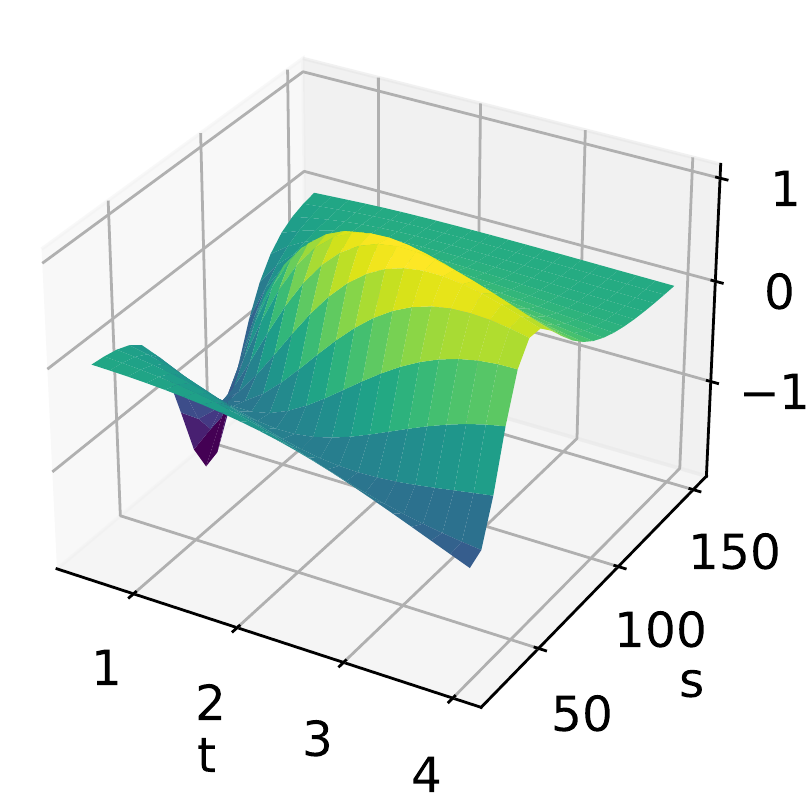}
\hspace{0.1em}
\includegraphics[width=.48\textwidth]{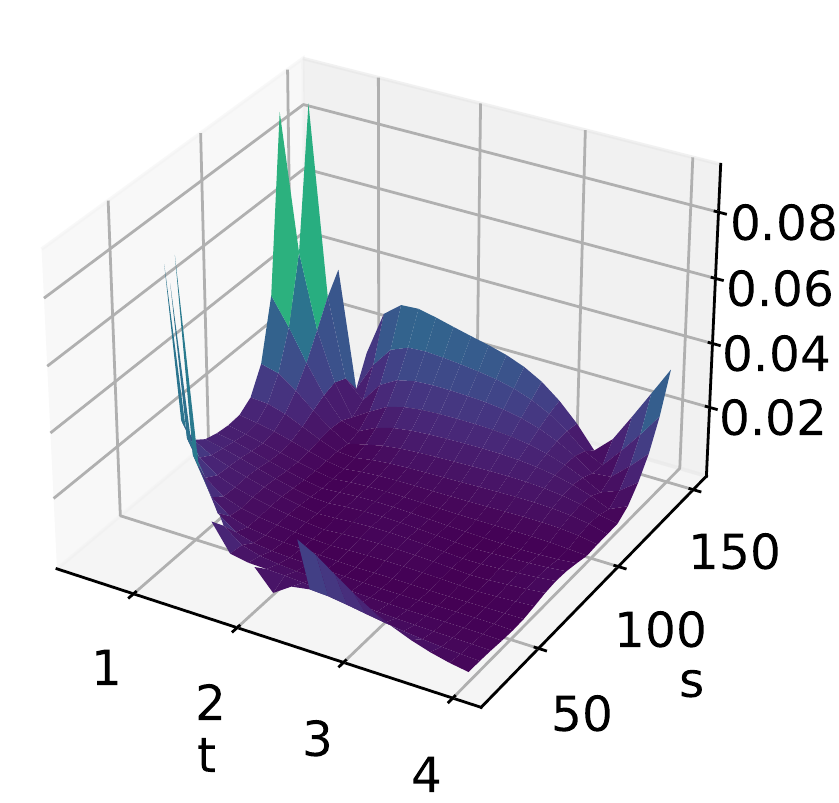}
\caption{Top row: One-dimensional deep parametric PDE solution (left) and errors (right). Bottom row: Residual   value of the solution (left) and  relative error of the implied volatility (right). 
All values shown for fixed parameters.}
\label{fig:solution_numerical_and_error_1d}
\end{center}
\end{minipage}
\end{figure}

Convergence of the residual and the validation error computed on a fixed set of $10,000$ random points are shown in Figure~\ref{fig:convergence_1d}. We see a clear improvement of the approximation over the iterations in both the loss function and the error, which shows the suitability of the considered loss functions and confirms the results of Theorem~\ref{thm:stability}.
In particular, we can see a clear relationship between the value of the loss function and the error on the validation set. This confirms that it is reasonable to use the loss function to detect convergence of the optimiser, as the validation error will not always be available in practice. 

Figure~\ref{fig:solution_numerical_and_error_1d} shows the approximation and the error over the domain of interest for  fixed parameters.
We see an overall well approximation of the option price with absolute errors considerably lower than $0.05$. While the option price is between $0$ and $100$ in the domain of interest, the localisation captures most parts successfully. The residual   value, which is approximated by the neural network, is betweens $-1.5$ to $1$. 

Also depicted in Figure~\ref{fig:solution_numerical_and_error_1d} is the relative error of the implied volatility. 
As the implied volatility can be very sensitive to the option price, we
 only compute it where the difference between the exact option price and the lower trivial no-arbitrage bound $c_{\rm{lb}}$ (defined in Section~\ref{sec:implied_volatility}) is larger than $0.005$. In most parts of the domain, the implied volatility is accurate with a relative error less than $1\%$, overall less than $10\%$. 
We see that the relative error of the implied volatility peaks for small time to maturities and far in or out of the money, in which cases the implied volatility is too sensitive to be a reasonable error measure.  
After the successful results with one underlying, we next consider European basket options.

\subsection{Basket Call Options}
We consider basket options with between two and eight assets. First, we study the error at fixed parameter values and then focus on the parameter dependency. Finally, we compare the proposed deep parametric PDE method to alternative machine learning approaches. 

\subsubsection{Evaluation for Fixed Parameters}\label{subsubsec:fixed_param}
We first consider small baskets with two and three assets and then larger ones with five and eight assets. As the visualisation of high dimensional functions is challenging, we use different approaches in different dimensions. 
While in the examples of this section the parameters are fixed, we stress that the same solution can be used to evaluate any parameter in the domain of interest. 

With two underlyings, we inspect the solution at the maximal time to maturity for fixed parameters, varying the price of the assets. 
Figure~\ref{fig:solution_numerical_and_error_2d} shows the approximation and the error over the domain of interest.
We see  a good approximation with error values below $0.1$ in the whole domain. Again, we note that the residual   value of the option is of a small magnitude with values varying between $-2$ and  $2$. 

\begin{figure}[htbp]
\begin{center}
\includegraphics[width=.3\textwidth]{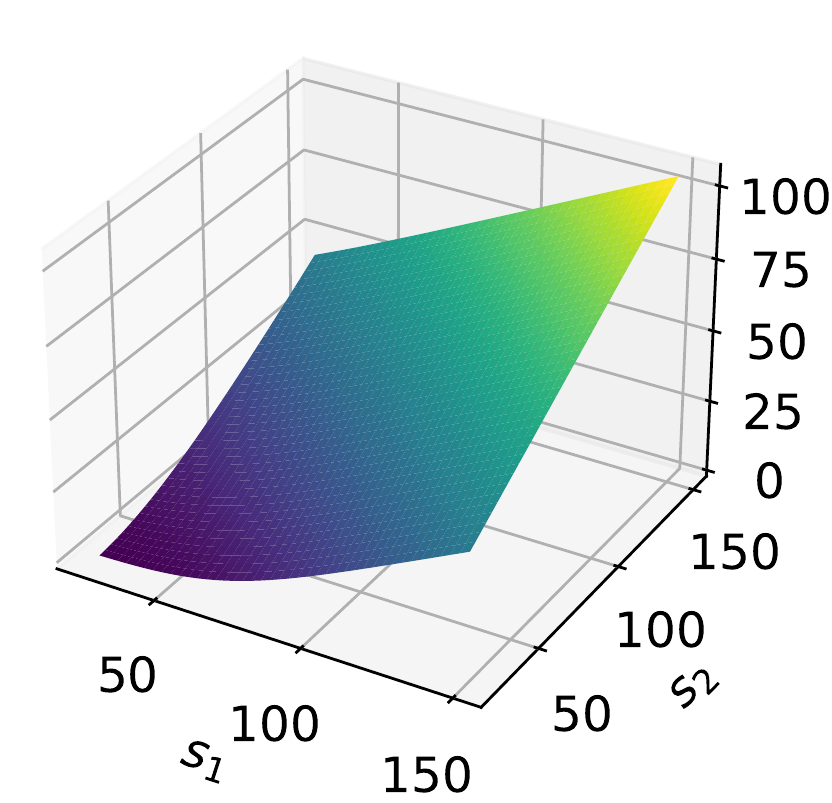}\hspace{1em}
\includegraphics[width=.3\textwidth]{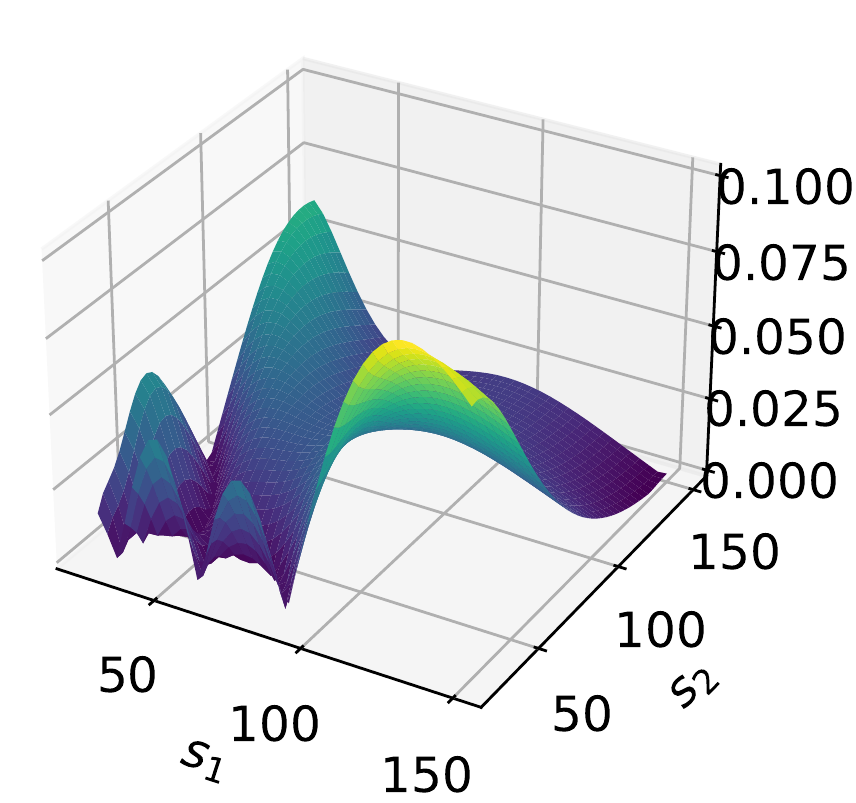}\hspace{1em}
\includegraphics[width=.3\textwidth]{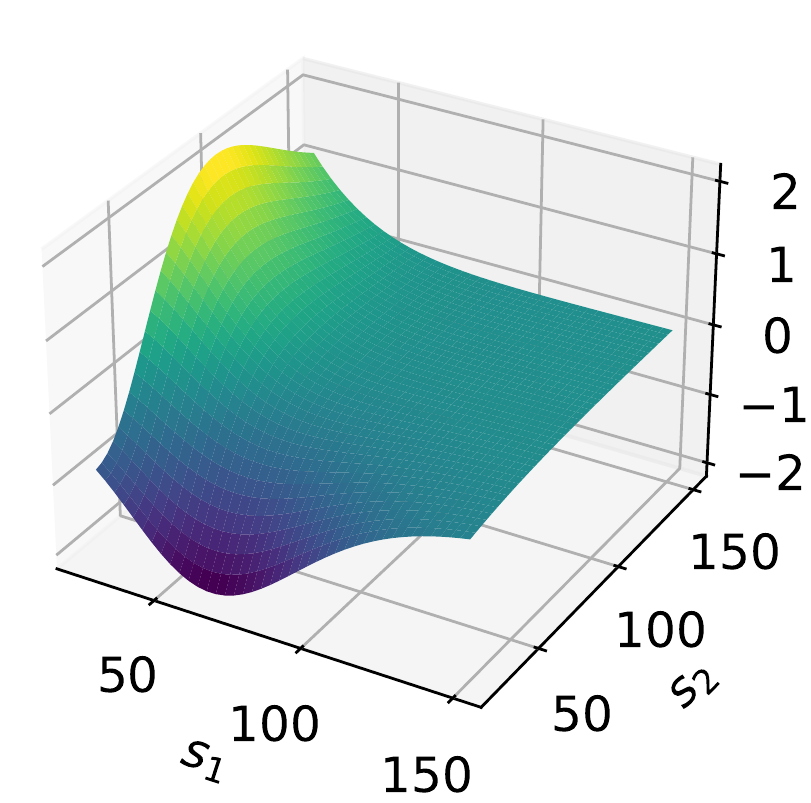}
\caption{From left to right: Deep parametric PDE approximation, errors  and approximated residual   value for two underlyings.
All pictures evaluated for $\sigma_1 = 0.1$ and $\sigma_2 = 0.3$ with $\rho_{12} = 0.5$ at maximal time to maturity $t=4$.
}
\label{fig:solution_numerical_and_error_2d}
\end{center}
\end{figure}

\begin{figure}[htbp]
\begin{center}
\begin{subfigure}{.49\textwidth}
\begin{center}
\includegraphics[width=.49\textwidth]{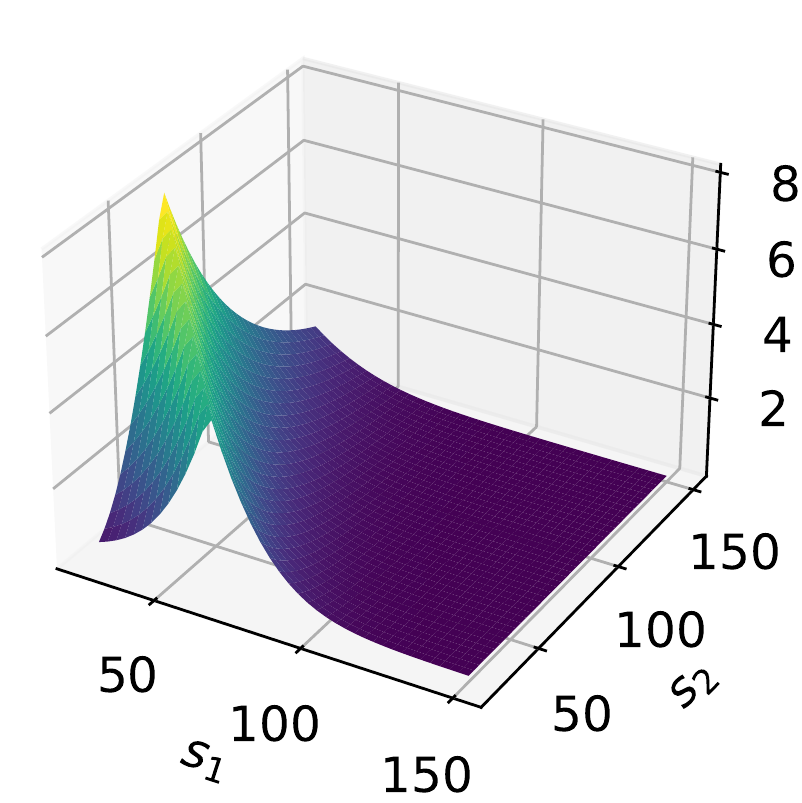}
\includegraphics[width=.49\textwidth]{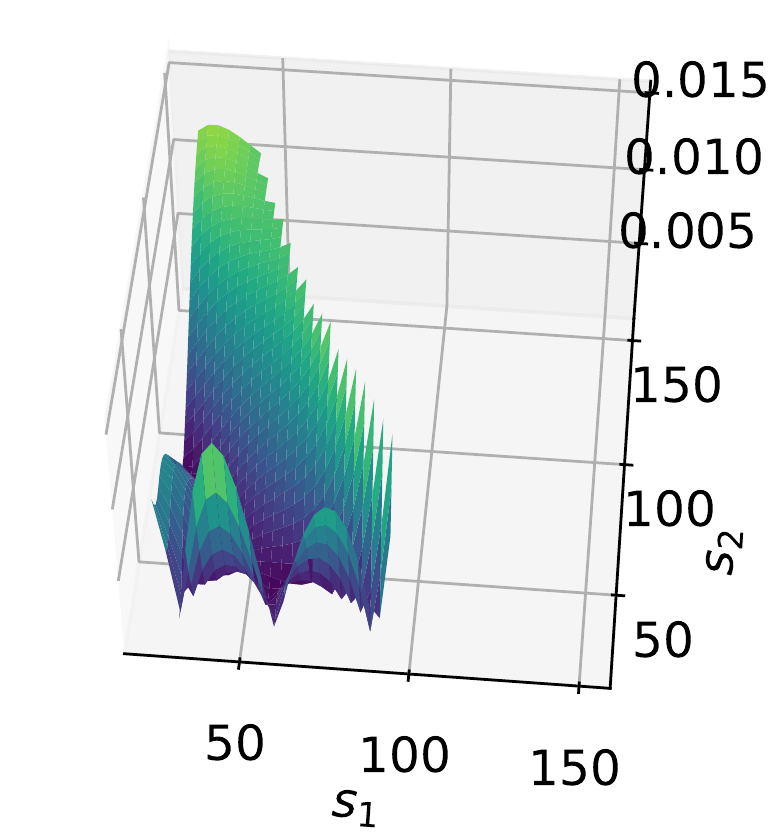}\hfill
\caption{$\sigma_1 = 0.1$ and $\sigma_2 = 0.3$.}
\label{fig:iv_2d_case_a}
\end{center}
\end{subfigure}\hfill
\begin{subfigure}{.49\textwidth}
\begin{center}\hfill
\includegraphics[width=.49\textwidth]{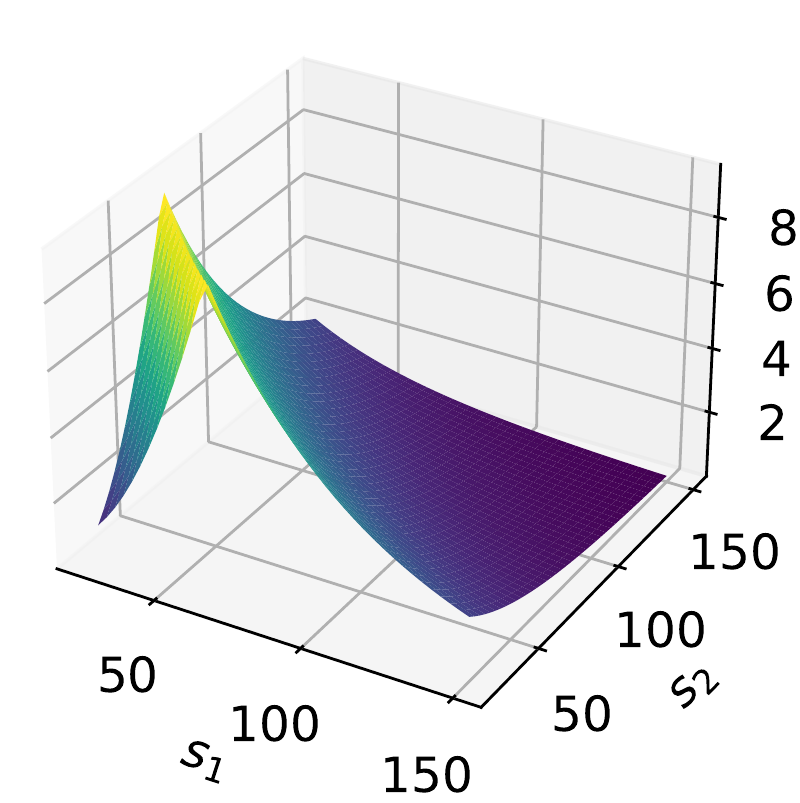}
\hfill
\includegraphics[width=.49\textwidth]{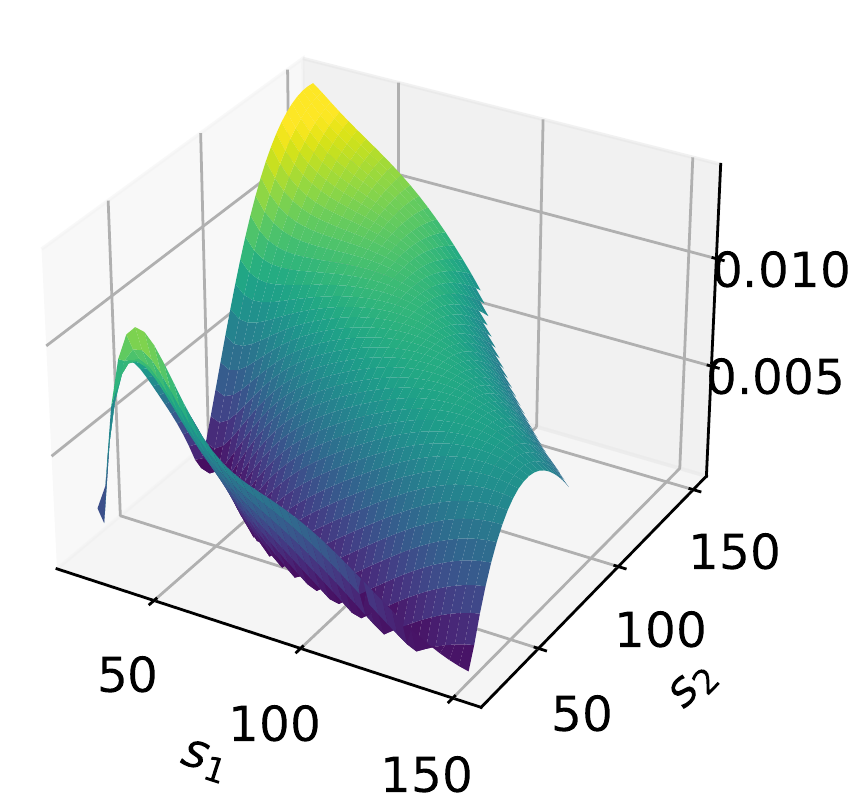}
\caption{$\sigma_1 = \sigma_2 = 0.3$}
\label{fig:iv_2d_case_b}
\end{center}
\end{subfigure}
\caption{
Left: Difference between the exact option price and the lower trivial no-arbitrage bound $c_{\rm{lb}}$. Right: relative error of the implied volatility, computed where the difference is larger than $0.5$.}
\end{center}
\end{figure}

For a more detailed evaluation of the solution quality, we also consider the implied volatility. As outlined in Section~\ref{sec:implied_volatility}, we use a concept extending the standard implied volatility to the multi-asset case.  
When options are far in the money or out of the money, the implied volatility becomes extremely sensitive. Therefore, we consider implied volatilities on a smaller domain, excluding these extreme cases.  In the left of Figure~\ref{fig:iv_2d_case_a}, we see the difference between the exact option price and the trivial no-arbitrage bound $c_{\rm{lb}}$. The difference is small far in- and far out of the money. We consider the implied volatility  as an interesting measure, where the difference  is larger than a threshold, which we choose as $0.5$.  The relative error of the implied volatility in this subdomain shown is on the right of Figure~\ref{fig:iv_2d_case_a}. Figure~\ref{fig:iv_2d_case_b} shows the same quantities for different parameter values. Thanks to the parametric approach, in both situations the solution is computed by evaluating the same neural network. With the second set of parameters, a smaller portion of the domain is far in the money. With both sets of parameters, the relative error in the implied volatility is smaller than $1.5\%$ and significantly smaller in most of the domain of interest.

For three underlyings, we   consider three-dimensional plots, varying each of the asset prices at final time and for fixed parameter values. In Figure~\ref{fig:errors_3d}, we show an excerpt of the error for $\sigma_1=\sigma_3=0.1$, $\sigma_2=0.2$, $\rho_{1,2}=0.2$ and $\rho_{2,3}=0.5$.
As seen in the left of the figure, in most parts of the domain of interest, the pricing error is less than $0.075$. Only a small  part of the domain exhibits errors of up to $0.22$. 
\begin{figure}[htbp]
\begin{center}
\includegraphics[width=.3\textwidth]{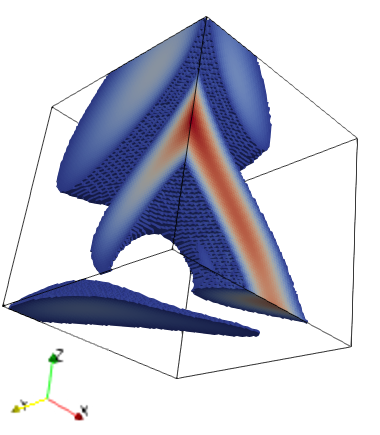}\hspace{1em}
\includegraphics[height=.3\textwidth]{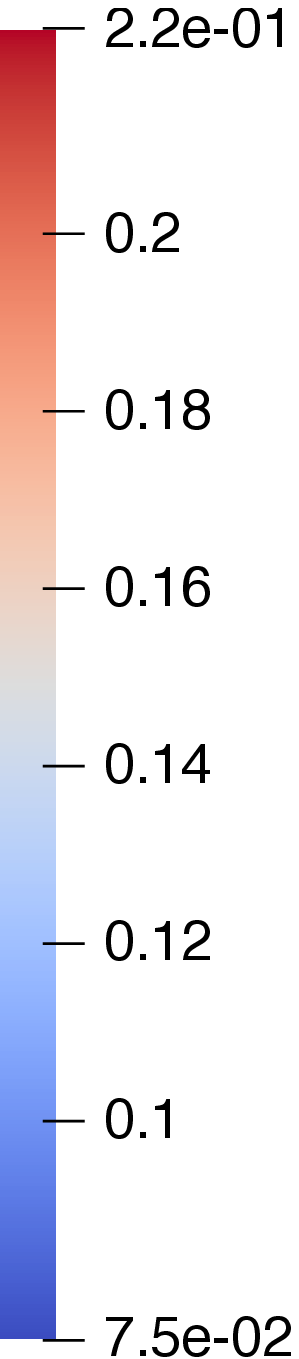}
\hspace*{.1\textwidth}
\includegraphics[width=.3\textwidth]{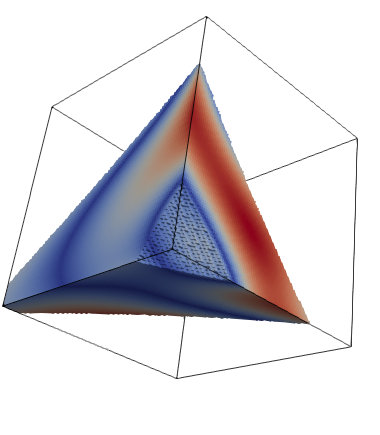}
\hspace{1em}
\includegraphics[height=.3\textwidth]{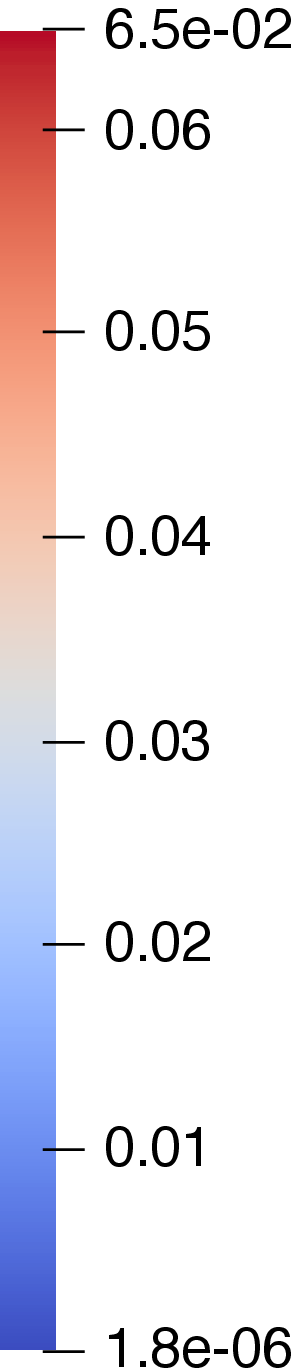}
\caption{Different errors for three underlyings.
Left: Pricing error, only shown where it is larger than $0.075$.
Right: Relative error in the volatility, evaluated where the difference to $c_{\rm lb}$ is larger than $0.5$.
Both images share the same perspective, facing the corner where $S_1=S_2=S_3=25$. The  parameters are $\sigma_1=\sigma_3=0.1$, $\sigma_2=0.2$, $ \rho_{1,2}=0.2$ and $ \rho_{2,3}=0.5$. }
\label{fig:errors_3d}
\end{center}
\end{figure}
In the right of the figure,  we show the relative error of the implied volatility. We see an error well below $10\%$ throughout the domain with  parts below $1\%$.

In the cases of more than three underlyings, we cannot display the $d$ asset prices any more. Instead, we sample over values with the same average asset price. In Figure~\ref{fig:solution_numerical_and_error_higher_dim}, we display the average option price and the maximal error over a set of randomly selected asset prices with the same average for the case five and eight underlyings. We see a similar behaviour as in the lower dimensional cases. The maximal error remains well below $0.2$. 
Note that with eight underlyings, there are $16$ parameters and the total dimensionality of the problem is $25$.

\begin{figure}[htbp]
\begin{center}
\includegraphics[width=.3\textwidth]{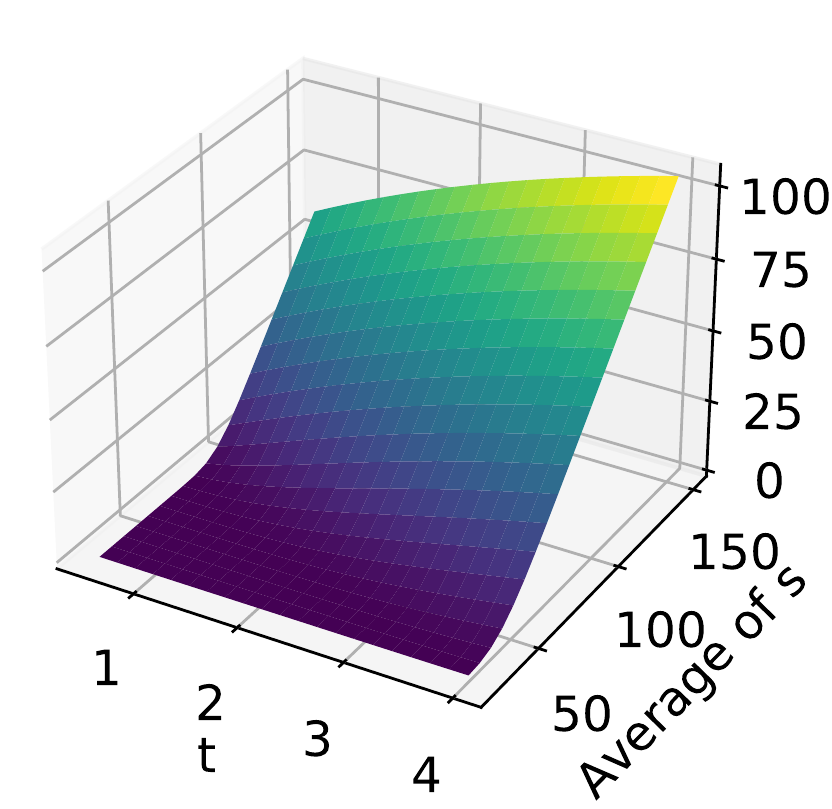}\hspace{1em}
\includegraphics[width=.3\textwidth]{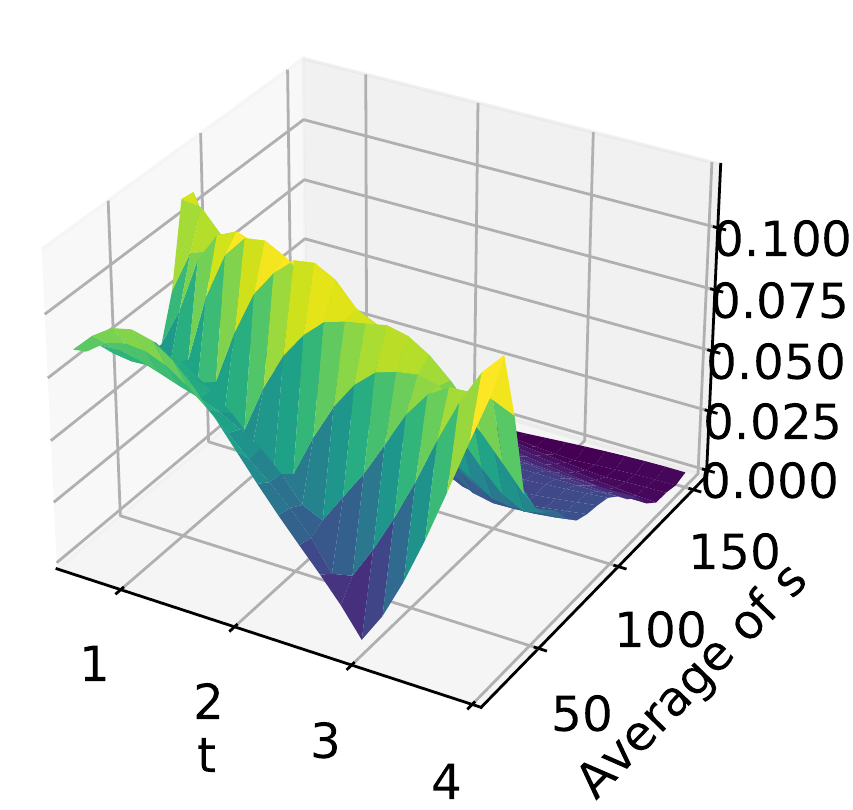}
\\
\includegraphics[width=.3\textwidth]{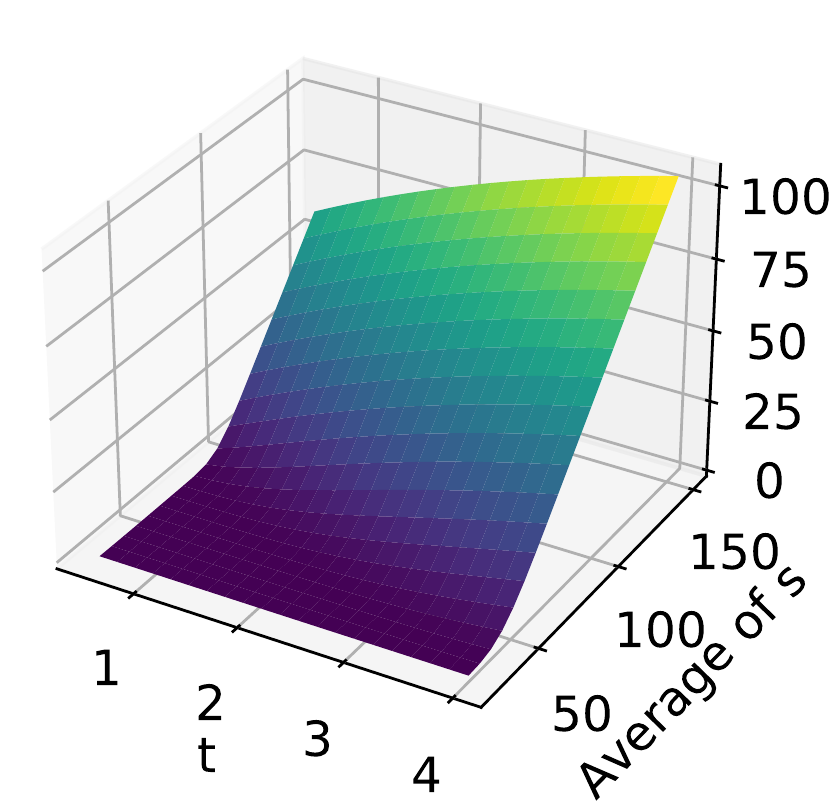}\hspace{1em}
\includegraphics[width=.3\textwidth]{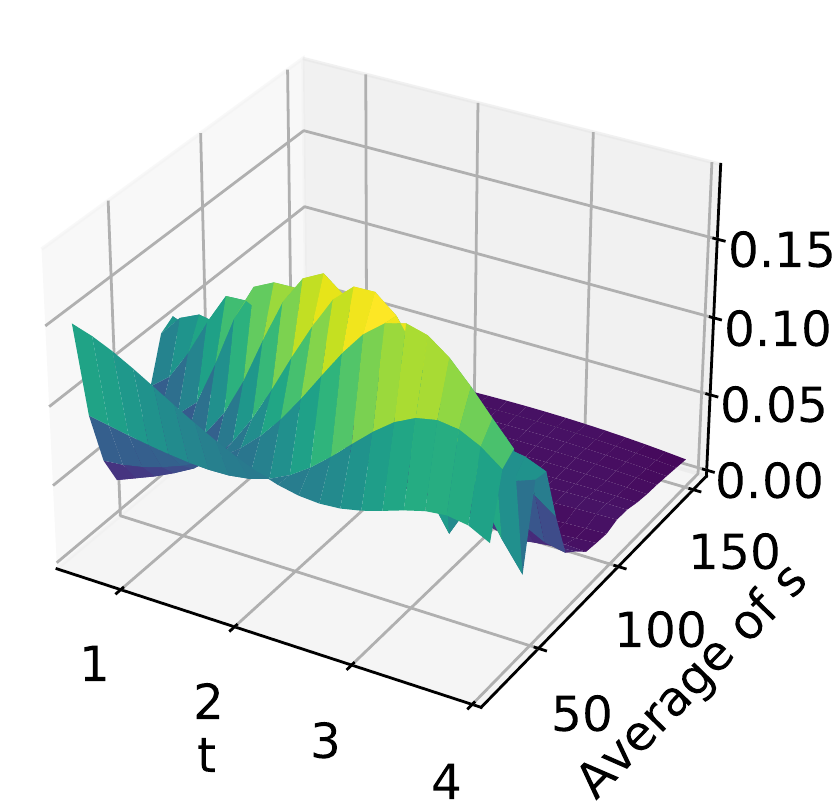} 
\\
\caption{Averaged solution (left) and maximal errors (right) for five (top) and eight (bottom) underlyings. }
\label{fig:solution_numerical_and_error_higher_dim}
\end{center}
\end{figure}

\subsubsection{Evaluation on the Whole Parameter Domain}
So far, we have discussed the approximation error for arbitrary but fixed parameters. As the deep parametric PDE method is able to approximate the solution on the whole parameter domain, we now consider errors for varying parameters.

\begin{figure}[htbp]
\begin{center}
\includegraphics[width=.30\textwidth]{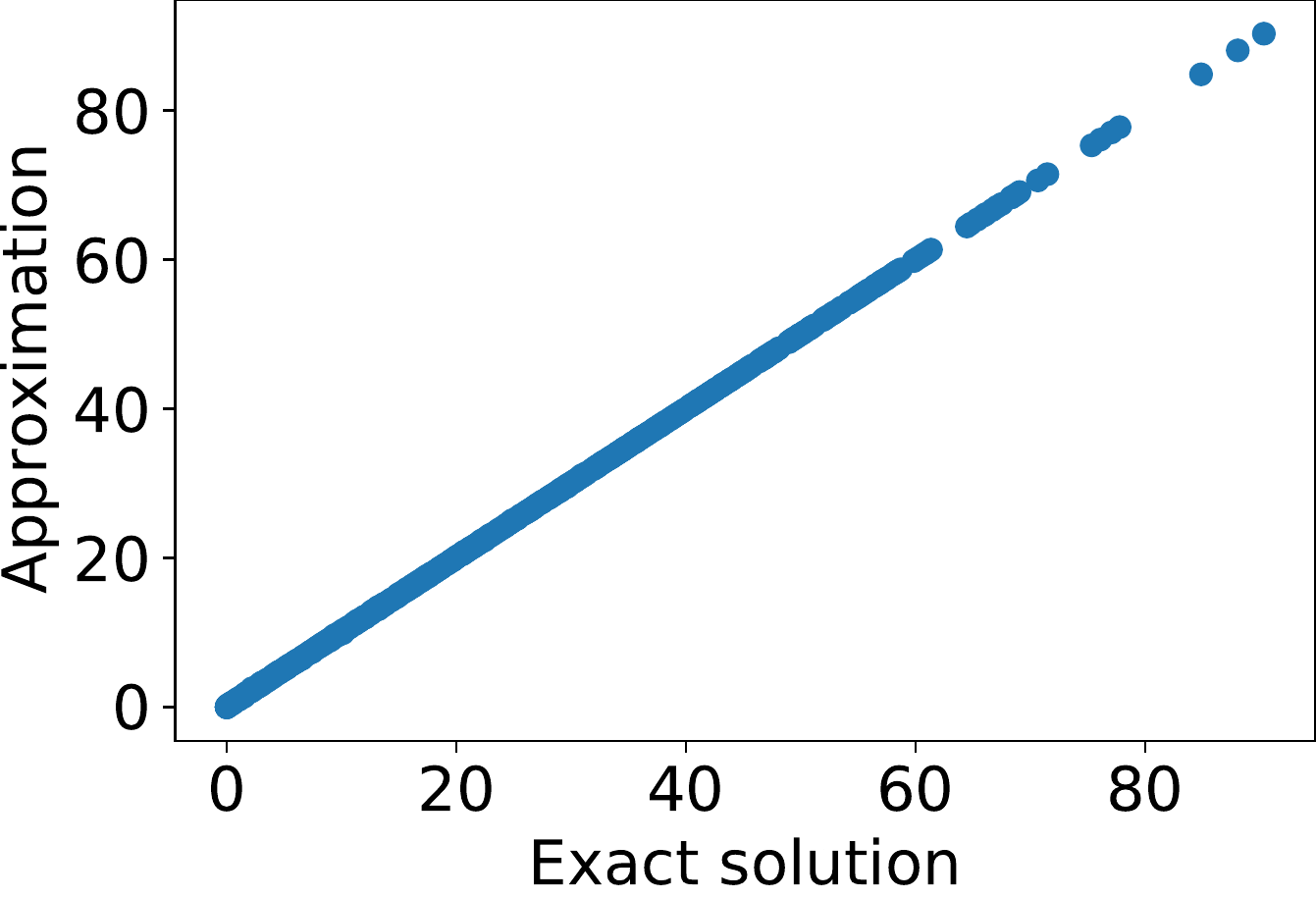}\hspace{1em}
\includegraphics[width=.30\textwidth]{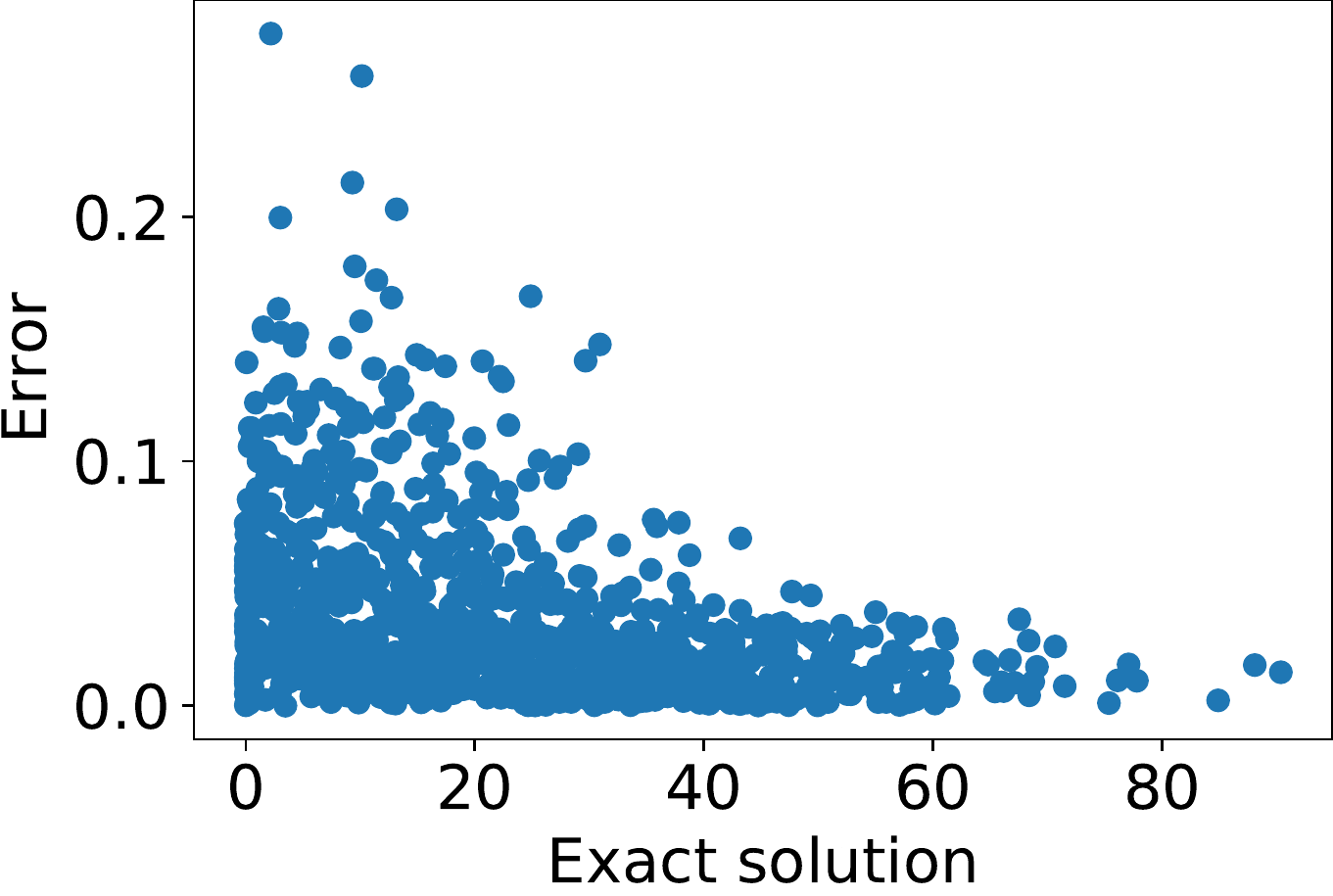}
\hspace{1em}
\includegraphics[width=.30\textwidth]{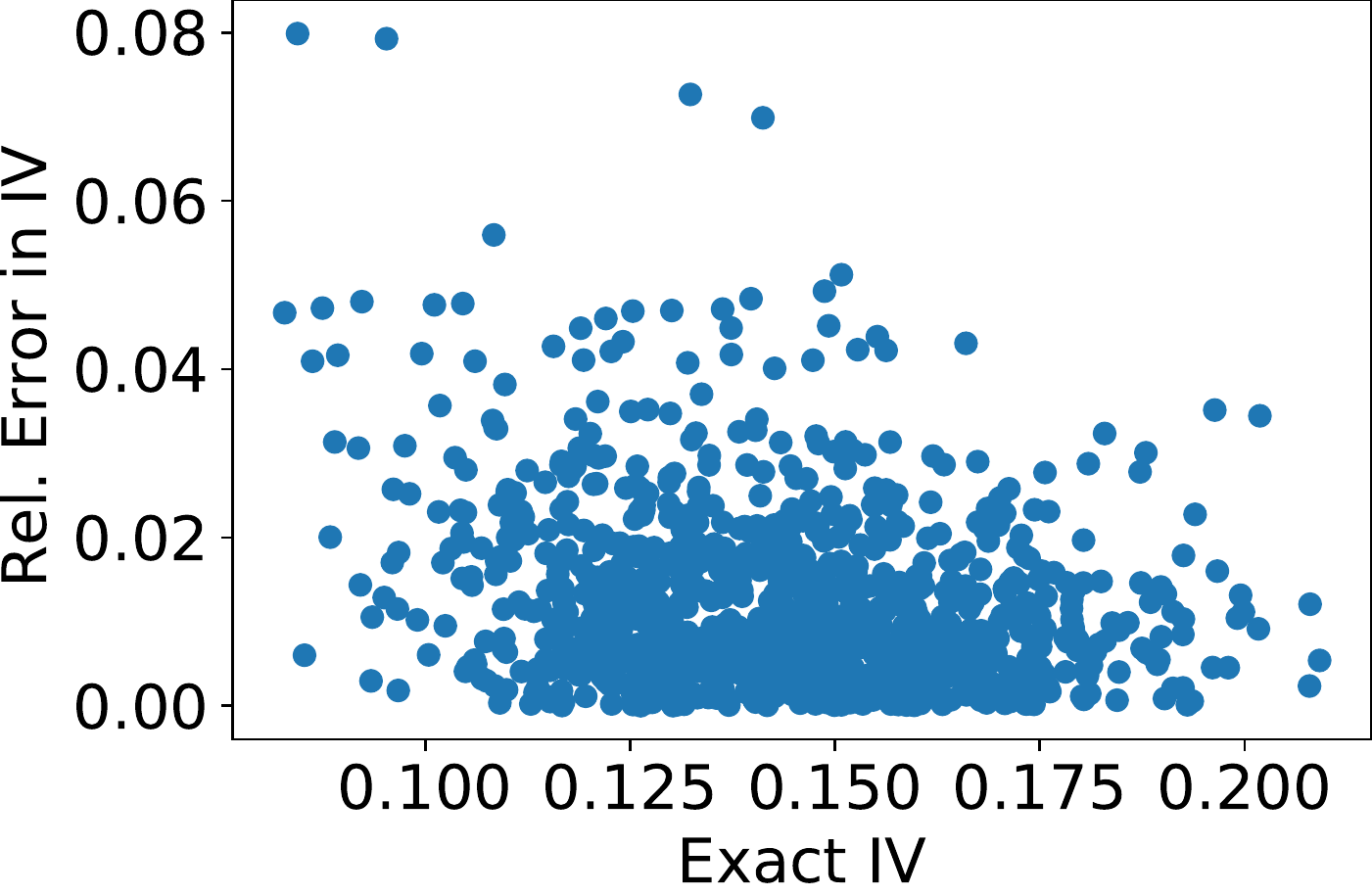}\\
\includegraphics[width=.30\textwidth]{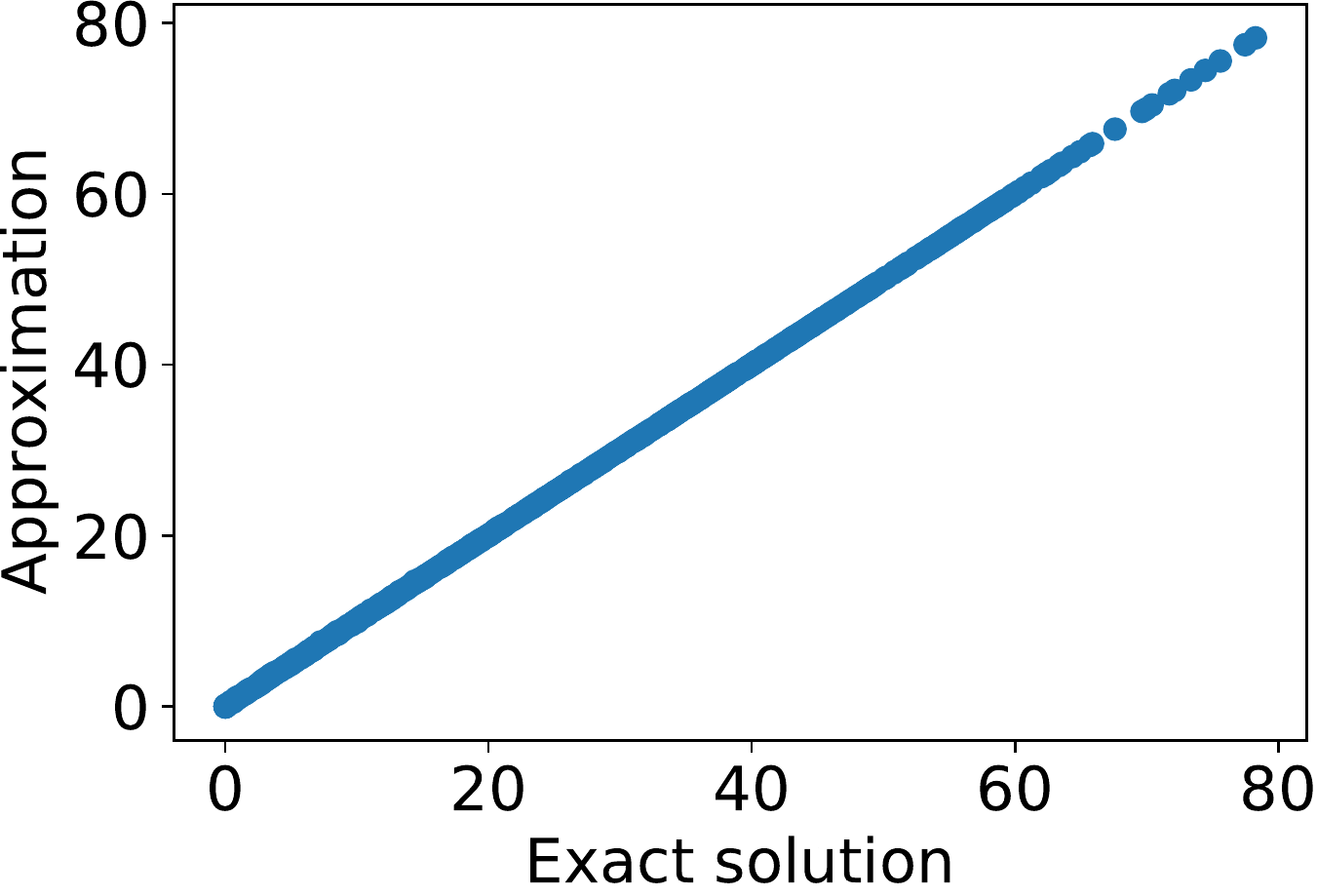}
\hspace{1em}
\includegraphics[width=.30\textwidth]{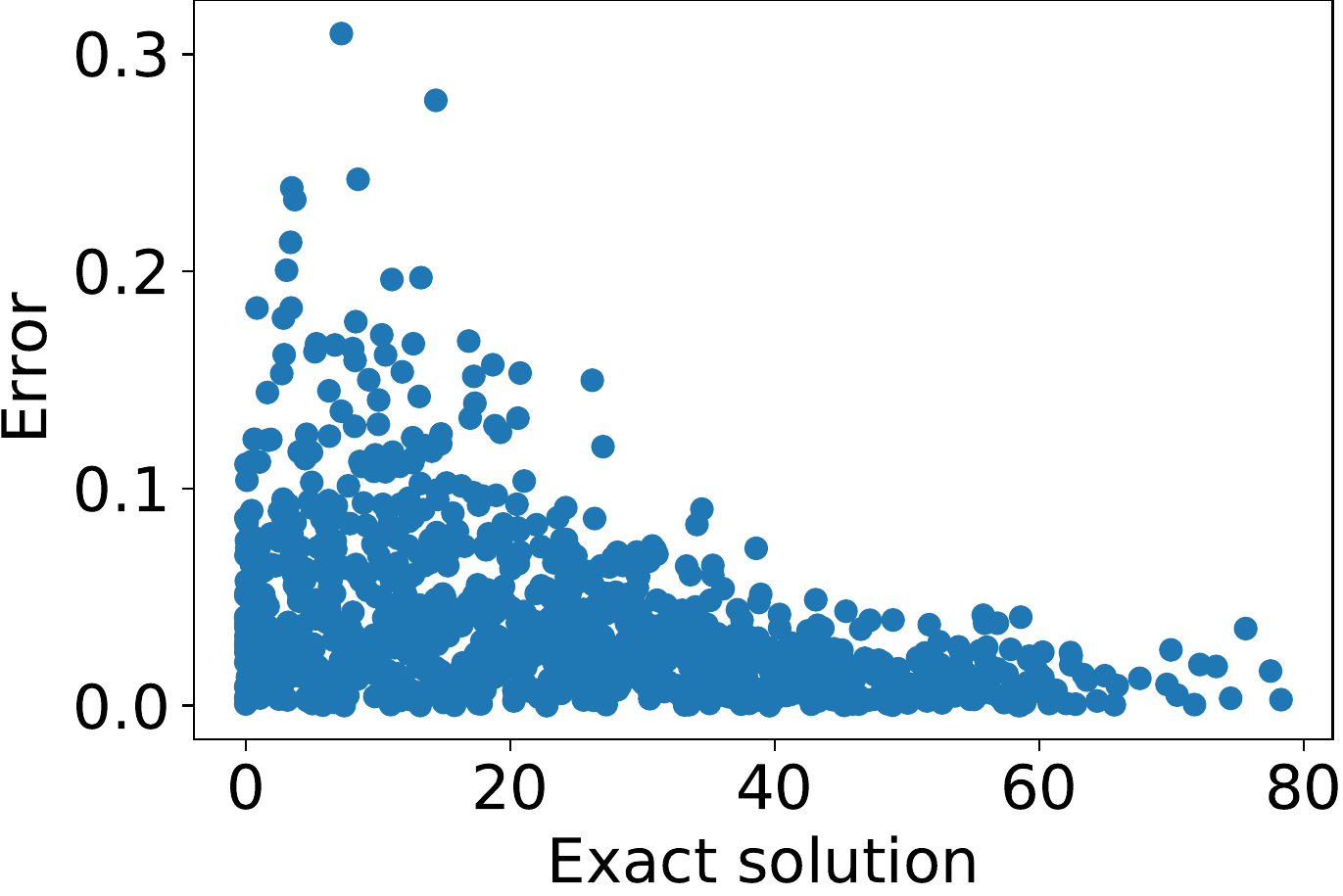}\hspace{1em}
\includegraphics[width=.30\textwidth]{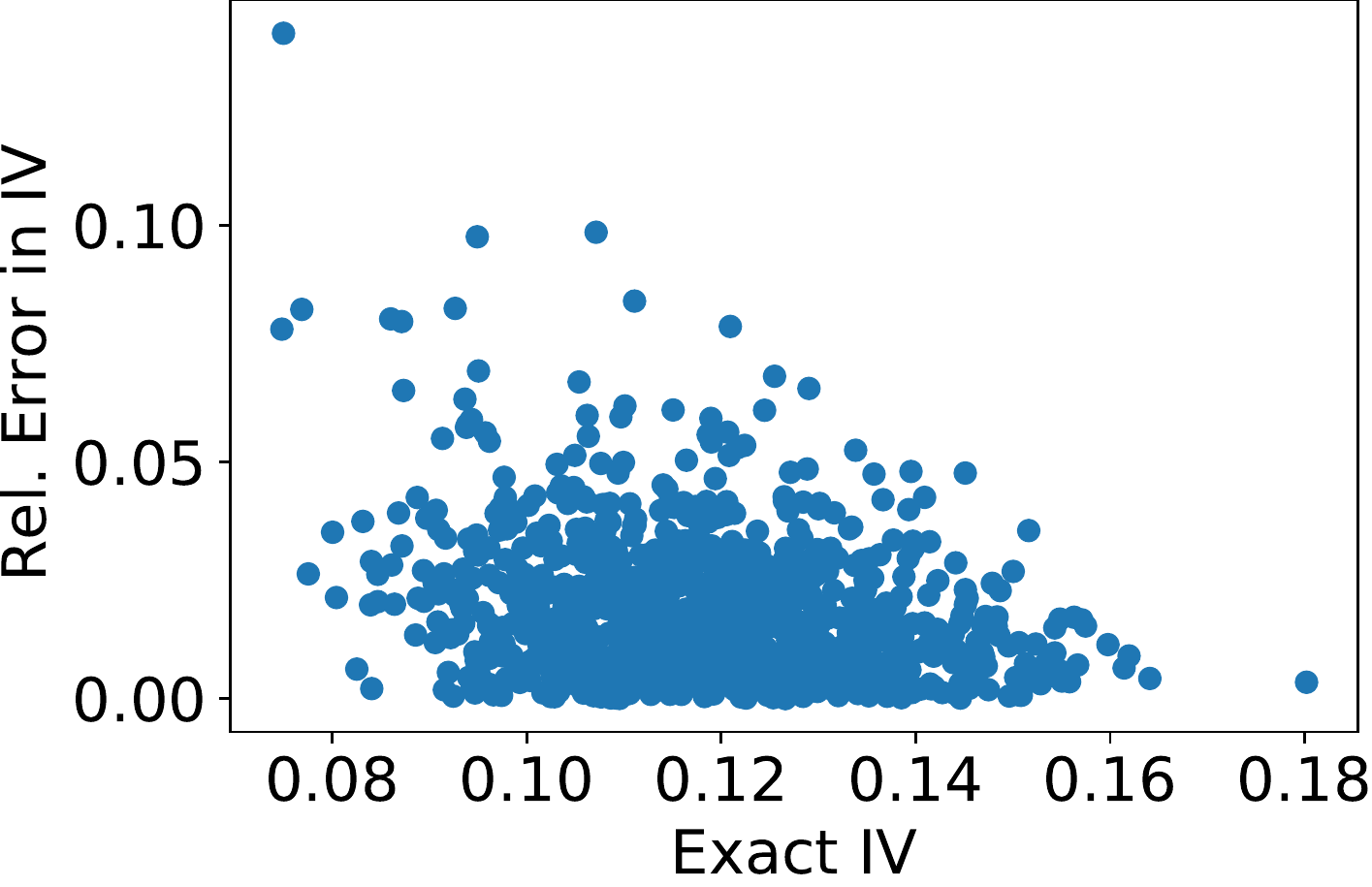}
\caption{Scatter of prices  evaluated for $1,000$ different tuples of time, asset prices and parameters chosen uniformly at random   from the region of interest. Top: Five underlyings. Bottom:  Eight underlyings. Left: Exact solution and approximation on the x- and the y-axis. Points close to the diagonal show a small error. Middle: Plot of the error against the exact solution. Right: Relative error of the implied volatility (IV), computed for option prices with a difference to $c_{\rm lb}$ of at least $0.5$. }
\label{fig:scattered_solution_and_error_higher_dim}
\end{center}
\end{figure}
The parametric solution has up to $25$ dimensions, posing challenges to visualising the error. As a first test, we consider $1,000$ random points from the whole domain of interest. These points have different time, state and parameter values, but with the deep parametric PDE method the option prices are quickly evaluated using a single neural network. In Figure~\ref{fig:scattered_solution_and_error_higher_dim}, different scatter plots are shown, visualising the exact and approximated solution, the absolute error and the relative error in the implied volatility. Most absolute errors remain below $0.1$. Isolated points can be seen with larger errors of up to $0.3$.  For the implied volatility a similar picture as before is seen, with most relative error values below $5\%$  and all relative error values below $15\%$. 
As the overall approximation is good, but there are some outliers, we further investigate the parameter-dependency of the error for the different dimensionalities of the problem.

\begin{figure}[htbp]
\begin{center}
\includegraphics[width=.3\textwidth]{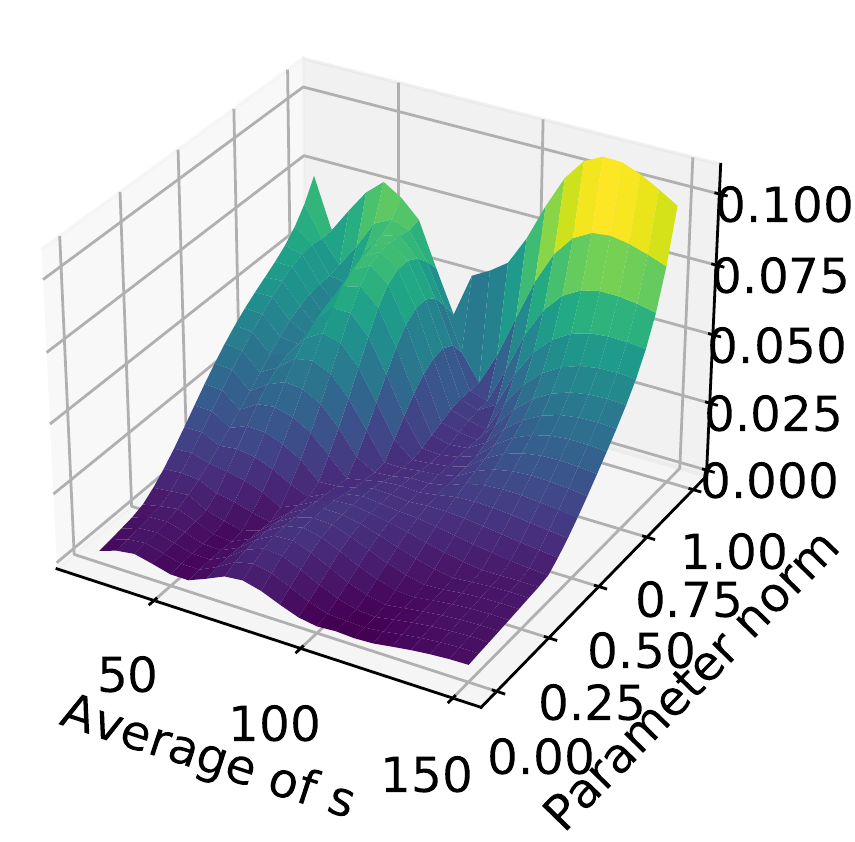}
\hfill
\includegraphics[width=.3\textwidth]{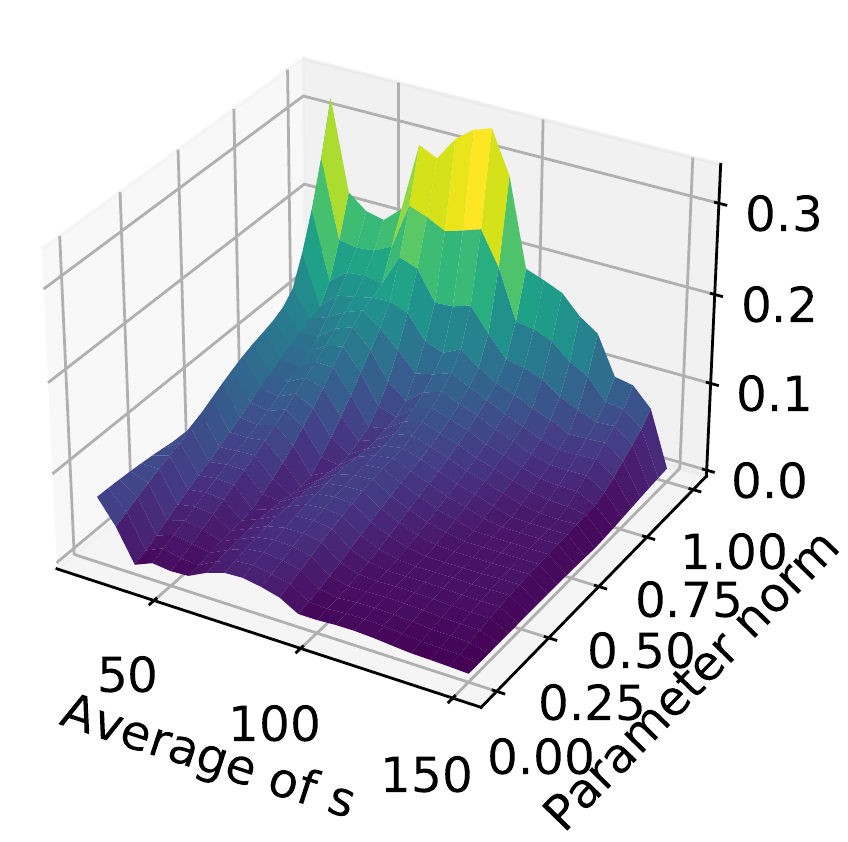}
\hfill
\includegraphics[width=.3\textwidth]{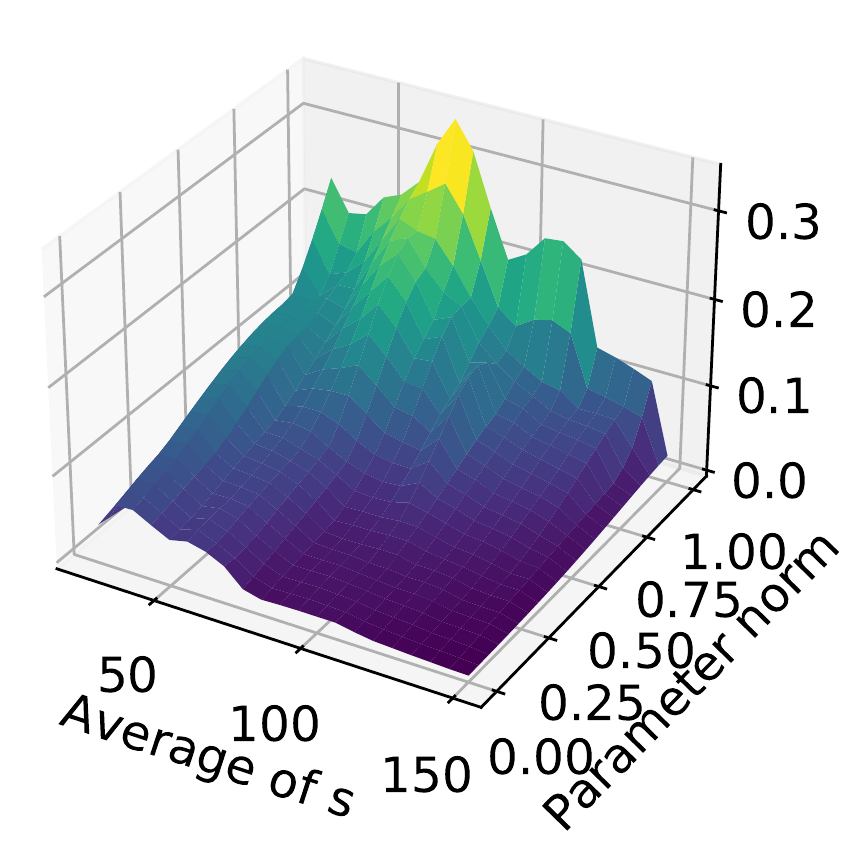}
\\
\includegraphics[width=.3\textwidth]{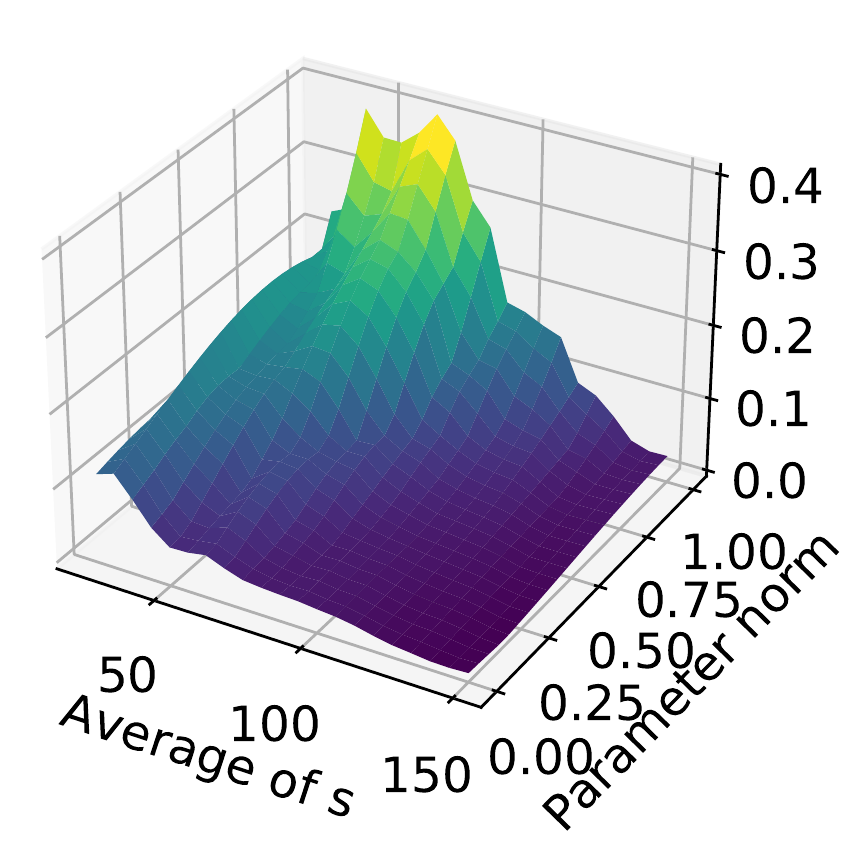}
\hspace{1em}
\includegraphics[width=.3\textwidth]{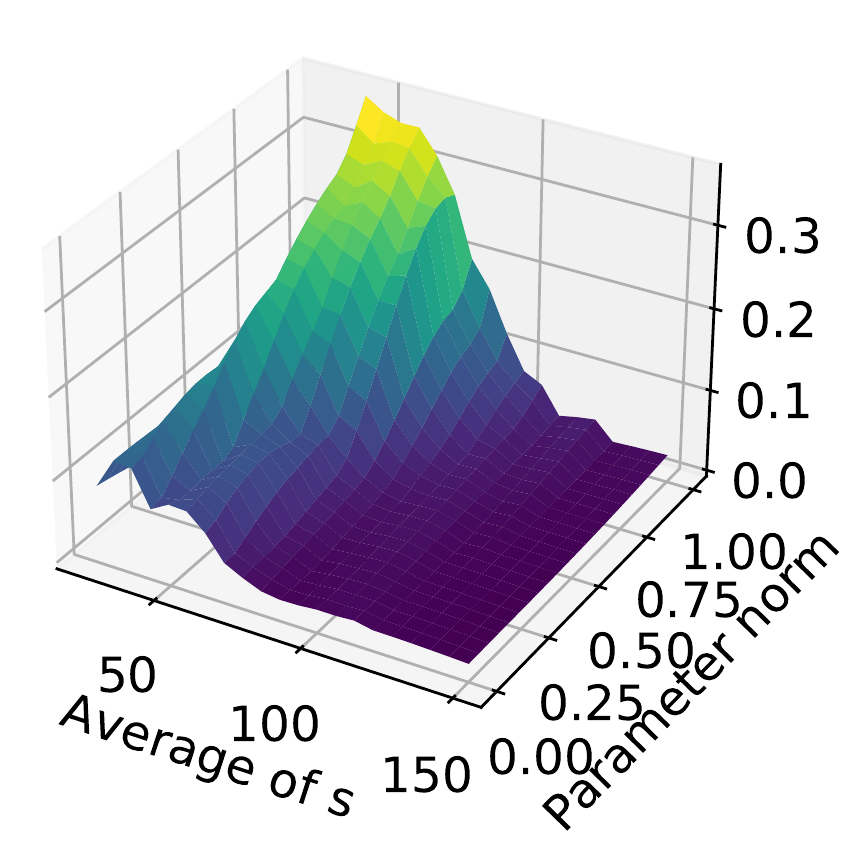}
\caption{Maximal error depending on the parameter and the average asset price at maximal time to maturity. Top row from from left to right $d=1, 2, 3$. Bottom row $d=5,8$.}
\label{fig:max_error_at_final_time}
\end{center}
\end{figure}

In Figure~\ref{fig:max_error_at_final_time}, the maximal error over different samples with the same mean asset price  and parameter norm are shown at maximal time to maturity. We note that the parameter values are normalised, i.e., $\Param=[-1,1]^{n_\mu}$, which means that the parameter norm measures the distance to the reference parameters $r = 0.2$, $\sigma_i = 0.2$ and $\hat\rho_i = 0.5$. We consider the maximal distance, i.e., the $\ell^\infty$-norm: $\|\mu\|_\infty=\max_{i\in\{1,\ldots, n_\mu\}} |\mu_i|$.  The mean asset price is given as $\bar S = \sum_{i=1}^d S_i/d$.
The offline runtimes for the cases $d=1,2,3,5,8$ are $6, 10, 15, 17$ and $58$min. While runtimes grow with the dimension, they clearly do not exhibit a curse of dimensionality.

For each combination of $\bar S$ and $\|\mu\|_\infty$,  $10,000$ samples are considered and the maximal error is shown in the figure. For eight underlyings, evaluating the reference pricer becomes very costly, so only $1,000$ samples are used. We see a clear increase of the error with the parameter norm, indicating that for parameters close to the center of the domain we have a higher accuracy. In the lower-dimensional cases, we see only little dependency on the average asset price, but the influence increases with more assets. In these cases, we see a peak  of the error when the option is near the money. However, as seen in the previous section, this does not necessarily imply an increase in the relative error when considering the implied volatility. 
We note that the maximal error does not increase with the dimension of the problem, but remains stable for the considered basket options with a total dimension of up to $25$.

\subsubsection{Comparison to Alternative Machine Learning Methods}
To fully understand the performance of the deep parametric PDE method, we comapare it to two other machine learning method: the deep BSDE solver~\cite{han:jentzen:18} and the deep Galerkin method~\cite{sirignano:17}. 

To compare with the deep BSDE solver, we use the code provided by the authors of~\cite{han:jentzen:18} on GitHub.  We adapt the example of the Black-Scholes equation with default risk to fit our setting. The only required changes are to change the payoff to a European basket call, to set the default risk to zero and to choose the number of underlying assets as $8$. We keep the parameters  as $T = 1$, $s_i = 100$, $r = 0.02$, $\sigma_i = 0.2$ and $\rho_{ij}=0$ for $i,j=1,\ldots, 8$, $i\neq j$. Within $6,000$ iterations the method does not converge yet, so we choose $15,000$ iteration steps.
For a fixed triple of time, state and parameters $(\hat \timet,\hat x ;\hat \mu)$, the deep BSDE method trains a neural network to compute the single option price $u(\hat \timet, \hat x;\hat\mu)$.

As the specified parameters are not part of our previously used parameter set, we retrain the deep parametric PDE method on a slightly different domain with  $r\in (0, 0.1)$ and $\hat\rho \in (-0.2, 0.2)$. 

The deep Galerkin method is related to the deep parametric PDE method. It shares the least-squares formulation of the PDE, but trains a single solution $(\timet, x)\mapsto u(\timet,x;\hat\mu)$ in the time-state-space for a fixed  parameter $\hat\mu$.
As an implementation of the deep Galerkin method, we therefore adapt the code of the deep parametric PDE method accordingly.
The original deep Galerkin method as presented in~\cite{sirignano:17}, does not include a localisation which improves the accuracy. In order to examine the effect of solving for a whole parameter domain against solving for a fixed parameter set, we keep all other specifications equal.  Particularly, we also use the localisation for the deep Galerkin method. 

All values and the reference pricer are listed in Table~\ref{tab:comparison_deepBSDE} and compared to a Monte-Carlo solution with $10^{9}$ samples. We see similar relative errors of less than $1\%$ for all machine learning pricers, with the deep parametric PDE method slightly closer to the true solution. It is worth noting that also the reference pricer is only a bit more accurate, which confirms the efficiency of deep neural networks.  

Table~\ref{tab:comparison_deepBSDE} also includes a comparison of runtimes. 
Reported runtimes for the offline-phases were measured on a GPU node on Queen Mary's cluster Apocrita, using an Nvidia Tesla V100. Online runtimes are measured on the end user device.  
Once the training is finished, the deep parametric PDE method provides the fastest solution to evaluate option prices with different parameters. The speed-up factor compared to the second fastest method, which is the reference pricer, is over  $30$. 
The evaluation time and accuracy of the deep Galerkin method and the deep parametric PDE method are both equivalent. For the cost of doubling the offline runtime, the deep parametric PDE method delivers the solution for the whole parameter set with no loss of accuracy.
Already calling the solver for three different parameter sets, the deep parametric PDE method yields a total runtime gain of roughly half an hour.

While the offline run-time of the deep parametric PDE method is close to an hour, it only needs to be performed once. This can be done at idle times, i.e., at night.  Afterwards, when the solution is required in real-time  it is readily available.

\begin{table}
{\small
\begin{tabular}{@{}p{4em} p{5.9em} p{5.9em} p{5.9em} p{5.9em}p{5.9em}@{}}
\toprule
& Monte-Carlo & Reference pricer & Deep BSDE & Deep Galerkin  & Deep Parametric PDE \\
\midrule
 value & 3.9166&3.9217&3.8986&3.9335&3.9327\\\addlinespace
rel. error & -- & 0.130\%  &0.460\%  & 0.431\%&0.411\% \\ \addlinespace
offline runtime  &--&--&--& {28min \newline  for each $\hat \mu$ }& {59min \newline once $\forall \mu\in\Param$} \\\addlinespace
online  runtime &6min 41s&1.34s&12min 7s&39.9ms&41.9ms\\	
\bottomrule
\end{tabular}
}
\caption{Comparison of different methods to compute the value of an at the money European basket call option in the Black-Scholes model.  Strike price $100$ with eight underlying assets $s_i=100$, $\sigma_i = 0.2$, $\rho_{ij} = 0$, $i, j=1,\ldots, 8, i\neq j$ and $r=0.02$.
Monte-Carlo solution with $10^9$ samples as the exact solution (estimated standard deviation $0.000344$).
}
\label{tab:comparison_deepBSDE}
\end{table}

\subsection{Benchmark Case with Explicit Solution}
To gain more insight into the performance in different settings, we  the deep parametric PDE method on a second, more academic problem. 
The pricing problem with the geometric payoff, as described in Section~\ref{sec:geometric_payoff}, has an explicit solution, which makes it an interesting example. 

Note that the trivial no-arbitrage bound is different than that of a standard basket option. The expectation of the averaged asset prices is $e^{(r-\beta) \timet} e^{\sum_{i=1}^dx_i/d} $, with 
$
\beta = 
\sigma^2 /2  (1 - d^{-1})   (1 - \rho)
$.
Taking this into account in the localisation~\eqref{eq:localisation}, it reads
\begin{align*}
\localisation(\timet,x; \mu) = \frac{1}{\lambda}\log\left(1+e^{\lambda\left(   e^{-\beta \timet} e^{\sum_{i=1}^d x_i/d} -  K e^{-r\timet} \right)}\right).
\end{align*}

The resulting approximation for two underlyings at maximal time to maturity is shown in Figure~\ref{fig:solution_geometric_2d}. We can see error levels below $0.05$ in the domain of interest. Compared to the basket call options, we see a different profile of the approximative residual   value.

\begin{figure}[htbp]
\begin{center}
\includegraphics[width=.3\textwidth]{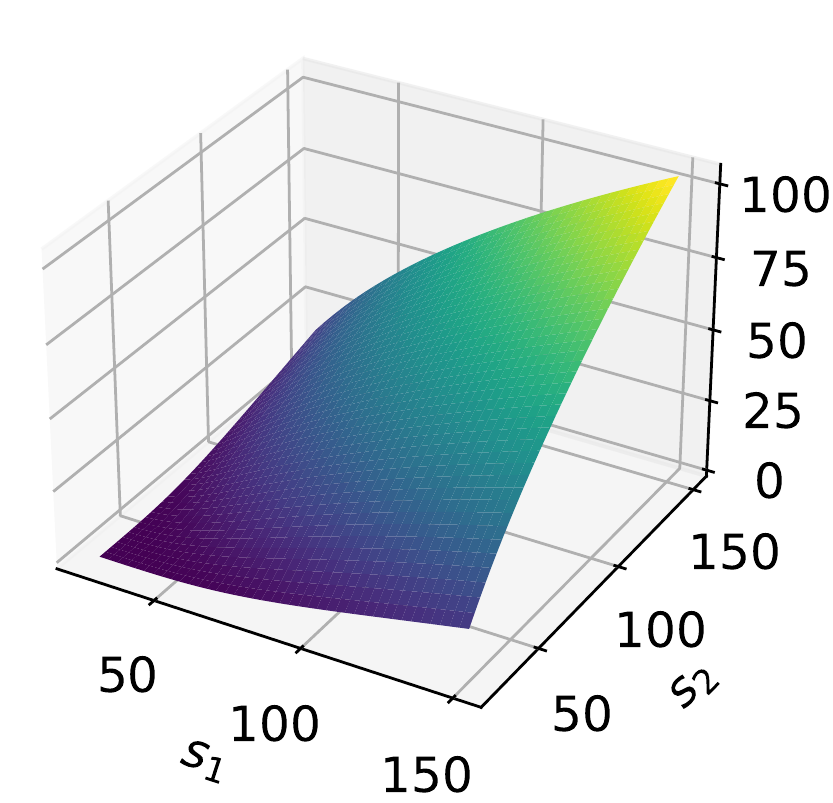}\hspace{1em}
\includegraphics[width=.3\textwidth]{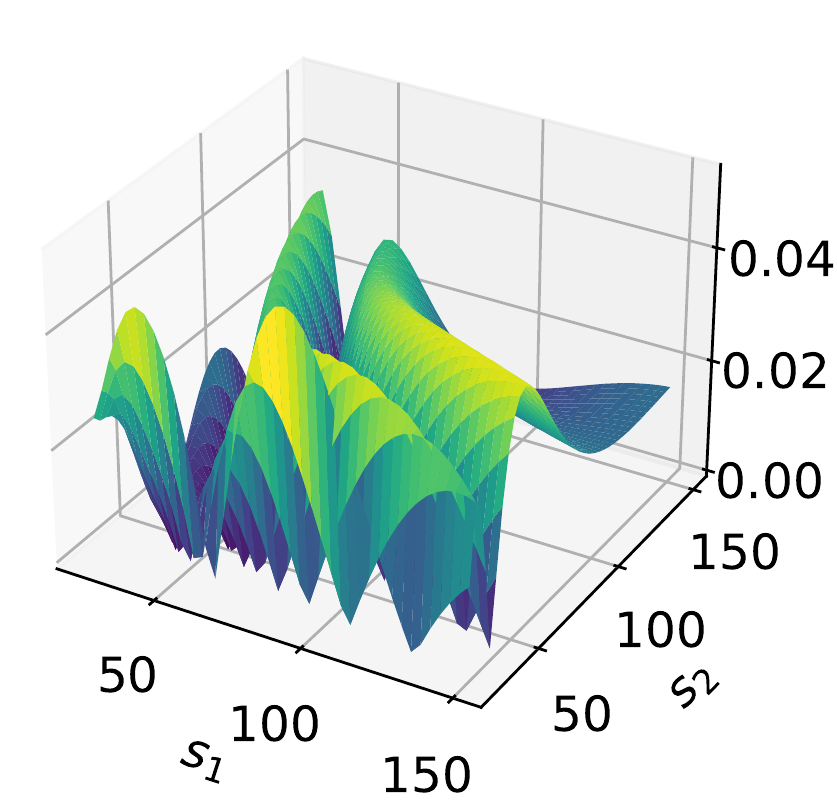}\hspace{1em}
\includegraphics[width=.3\textwidth]{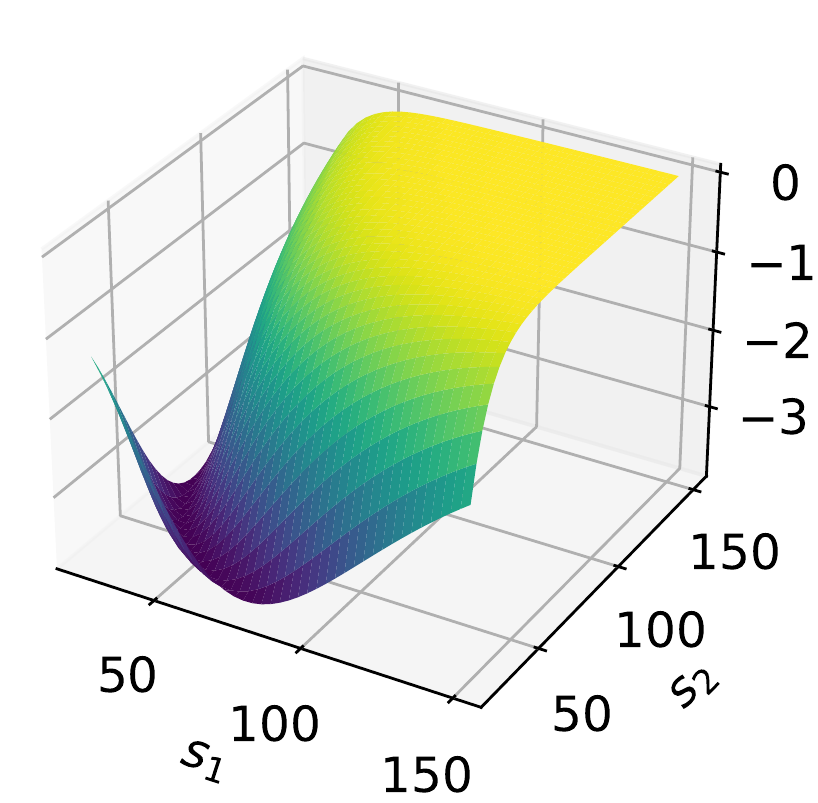}
\caption{From left to right: Deep parametric PDE approximation, errors  and approximated residual   value for two underlyings.
All pictures evaluated for $\sigma = 0.1$ at maximal time to maturity using the geometric payoff.
}
\label{fig:solution_geometric_2d}
\end{center}
\end{figure}
 Figure~\ref{fig:max_error_at_final_time_geometric} shows the maximal error depending on the parameter norm and the average asset price. In all cases, we see a peak of the maximal error near the money and for large parameter values. This means that for applications with a smaller parameter domain of interest, the method is significantly more accurate.
In higher dimensions, the maximal error is only slightly higher.

\begin{figure}[htbp]
\begin{center}
\includegraphics[width=.3\textwidth]{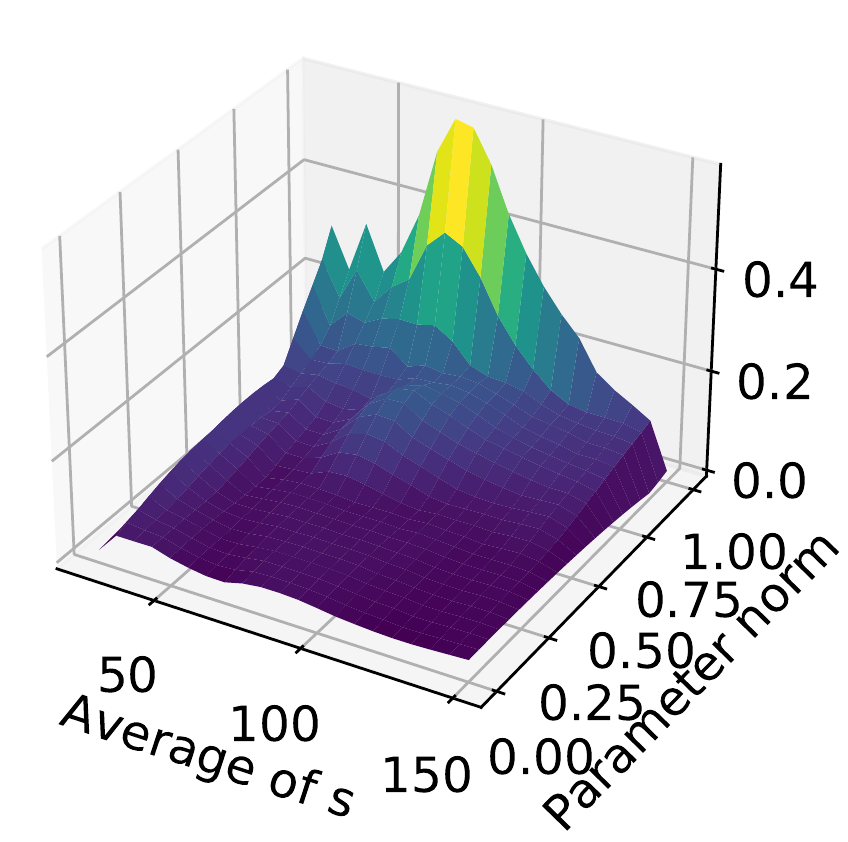}
\hspace{1em}
\includegraphics[width=.3\textwidth]{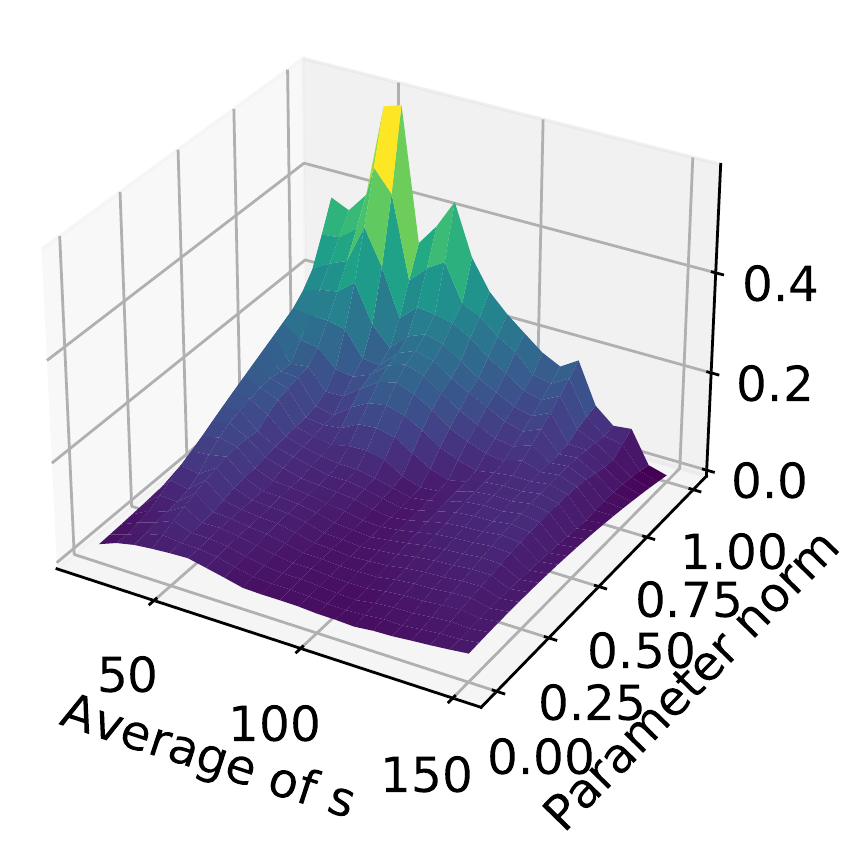}
\hspace{1em}
\includegraphics[width=.3\textwidth]{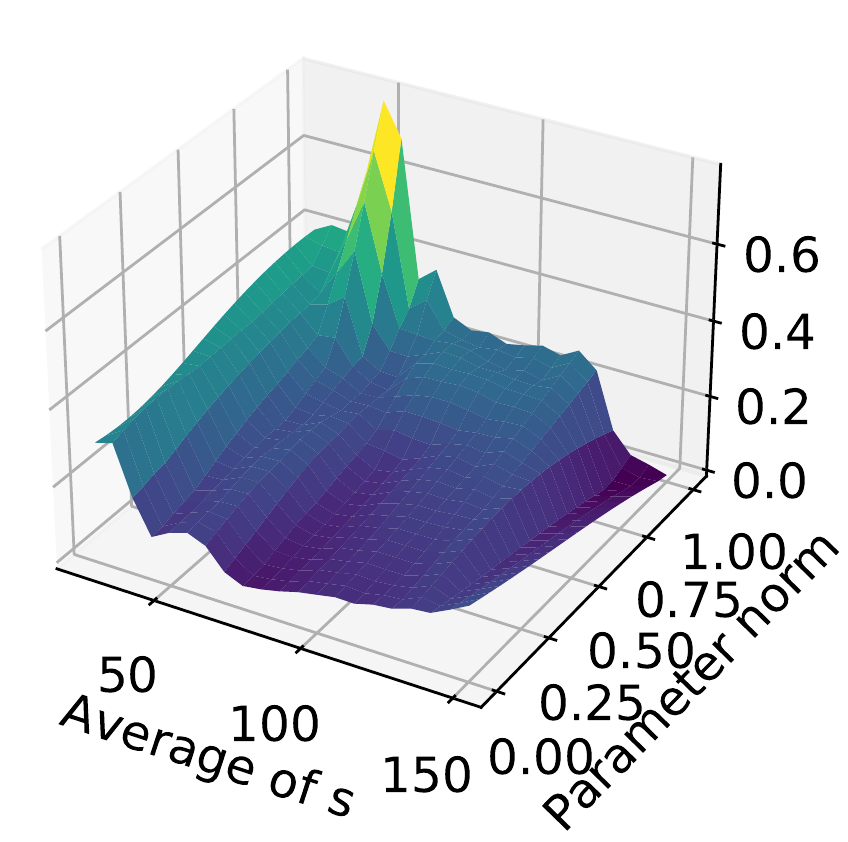}
\caption{Maximal error using geometric payoff depending on the parameter and the average asset price at maximal time to maturity for different dimensions. From left to right $d=3, 5,8$.}
\label{fig:max_error_at_final_time_geometric}
\end{center}
\end{figure}

\begin{figure}[htbp]
\begin{center}
\includegraphics[width=.3\textwidth]{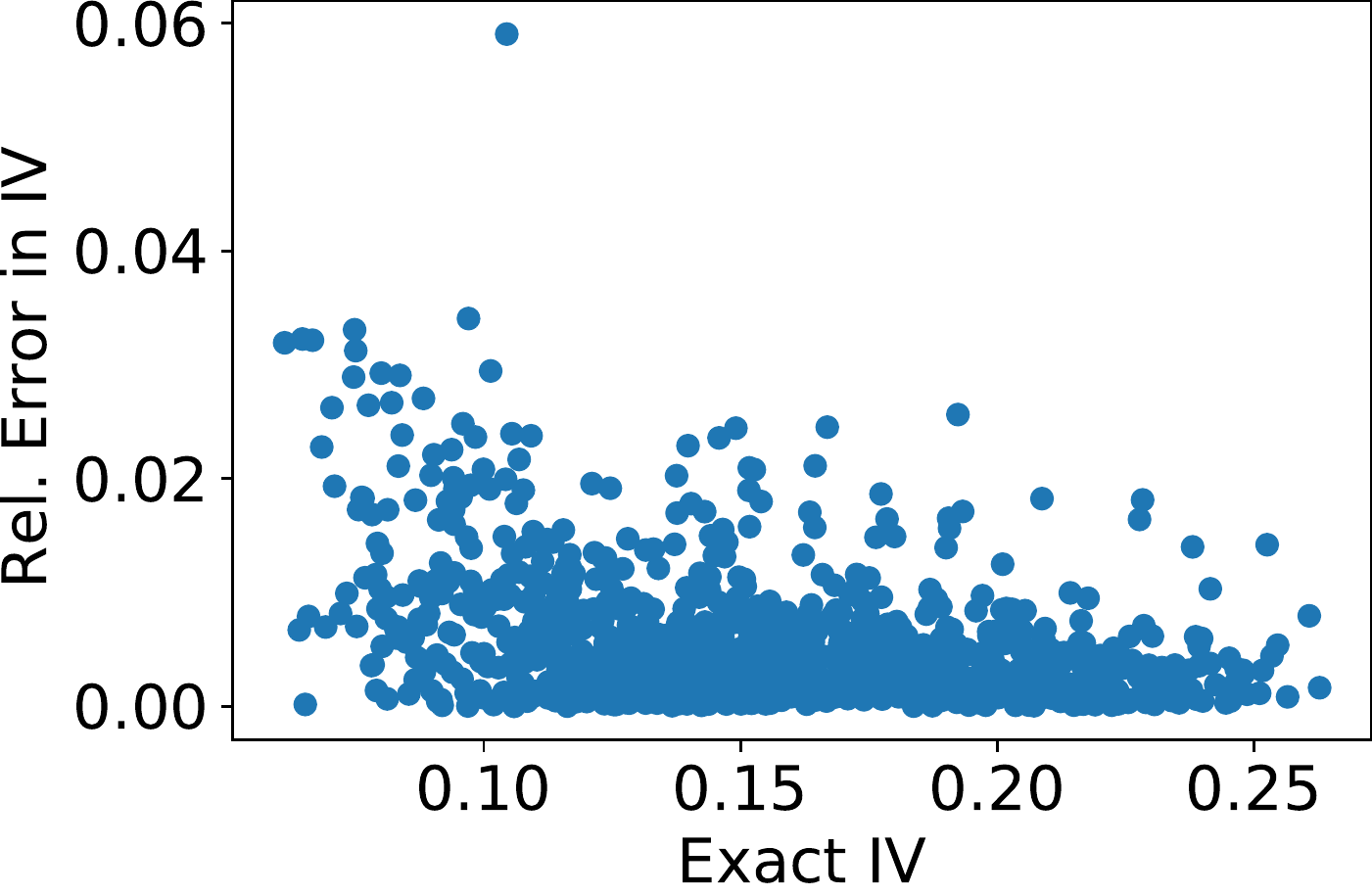}
\includegraphics[width=.3\textwidth]{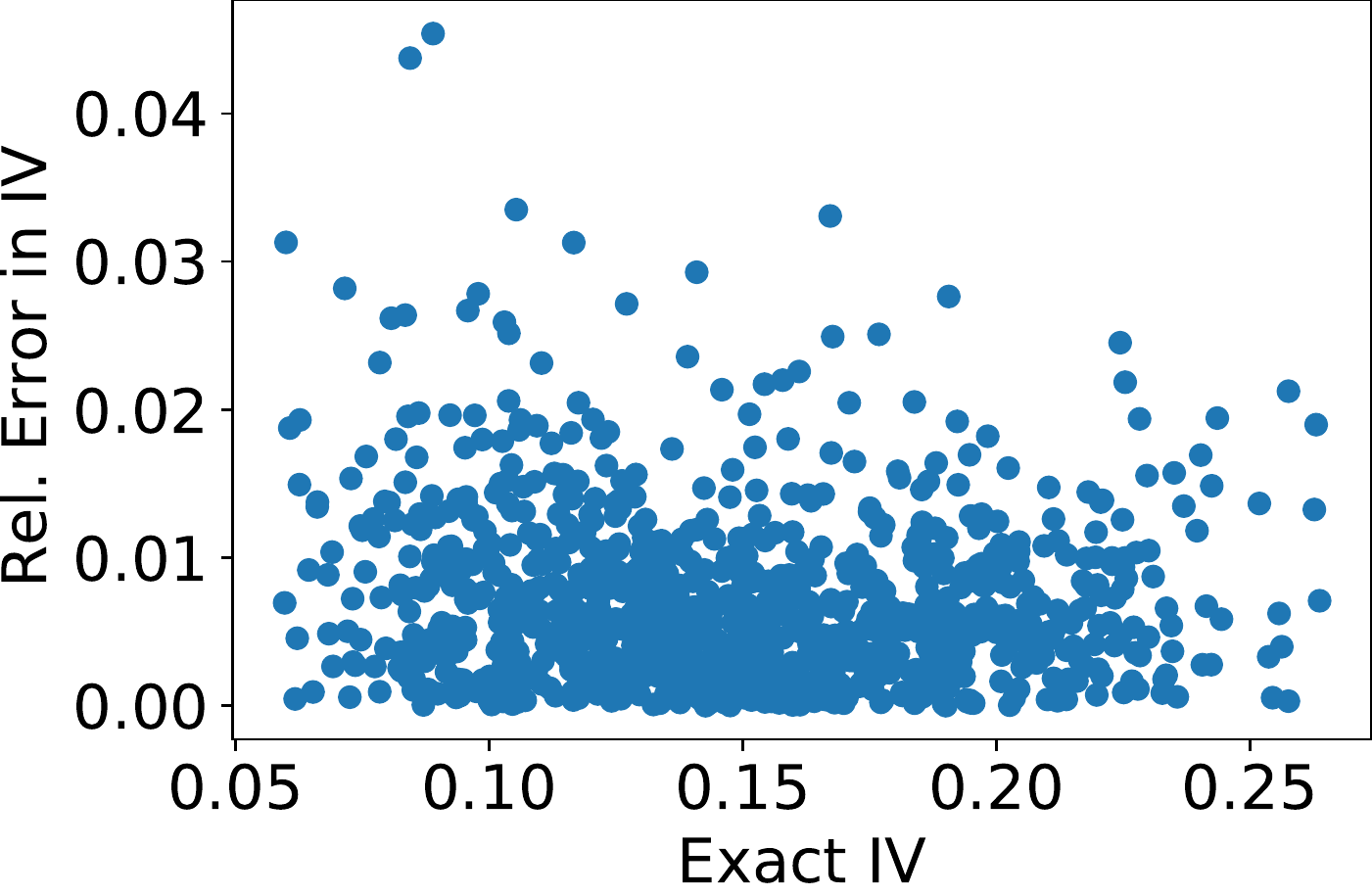}\hspace{1em}
\includegraphics[width=.3\textwidth]{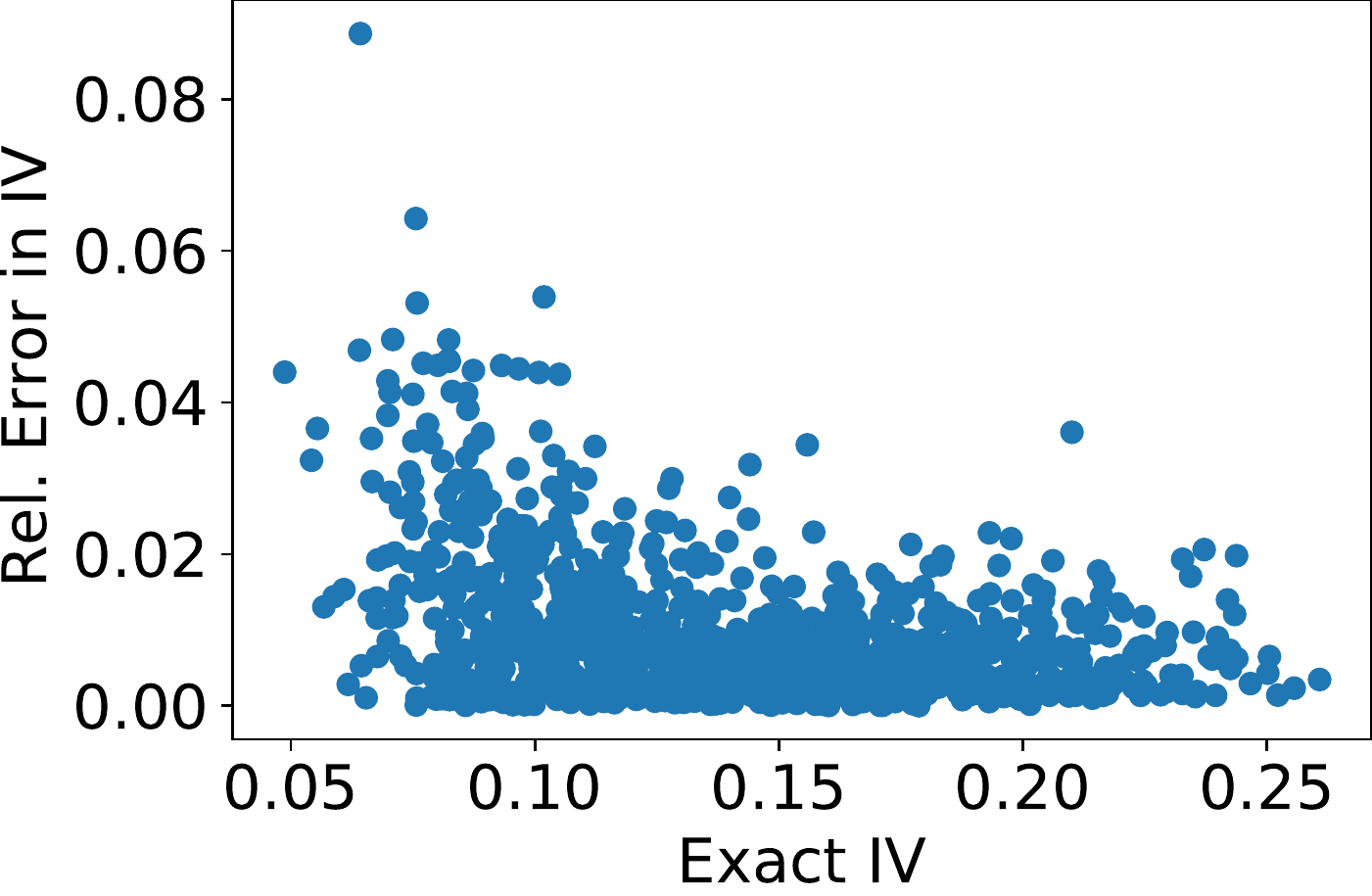}
\caption{Relative implied volatility error for $d=3, 5, 8$ (from left to right)}
\label{fig:scattered_iv_geometric}
\end{center}
\end{figure}
To measure relative errors, we again consider the implied volatility as described in Section~\ref{sec:implied_volatility}. The relative error for $1,000$ random values is shown in Figure~\ref{fig:scattered_iv_geometric}. They show similar error magnitudes compared to the original setting of   basket call options. 

In summary, this shows flexibility of the deep parametric PDE method as we see robust results also in this setting. 

\subsection{Greeks and Sensitivities}
Besides the approximated prices, the Greeks (i.e., derivatives of the price surface) are often of interest for applications, e.g., for hedging. Neural networks provide easy access to these derivatives. Exemplarily,  Figure~\ref{fig:greeks} shows the derivative of the price for two assets with respect to the log-price of the first underlying as well as its accuracy. We see a relative accuracy of below $1\%$  throughout most of the domain. 
\begin{figure}
\begin{center}
\includegraphics[width=.3\textwidth]{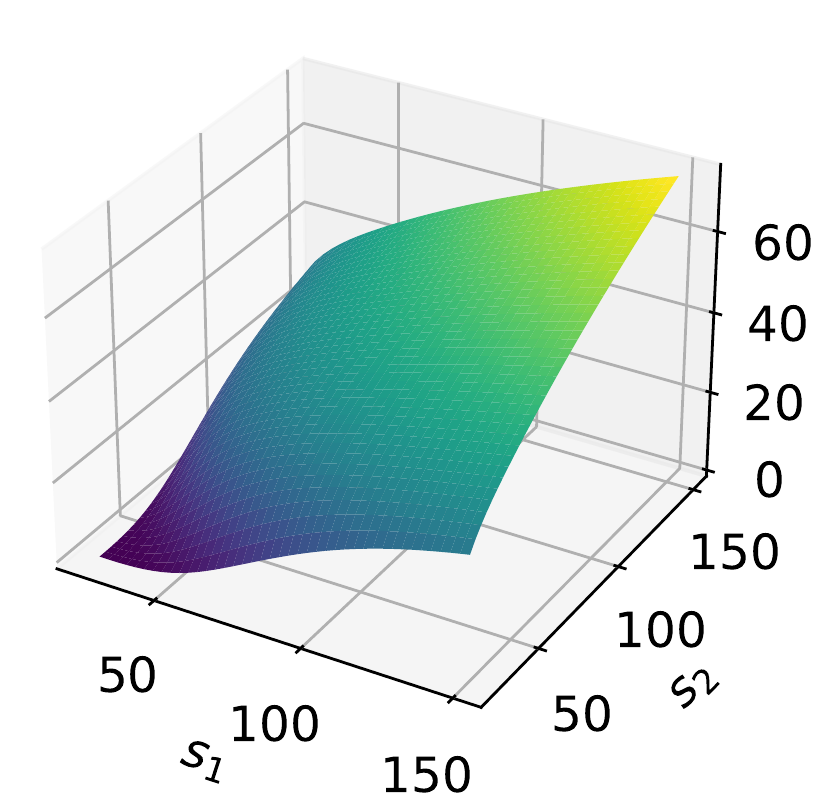}\hspace{1em}
\includegraphics[width=.3\textwidth]{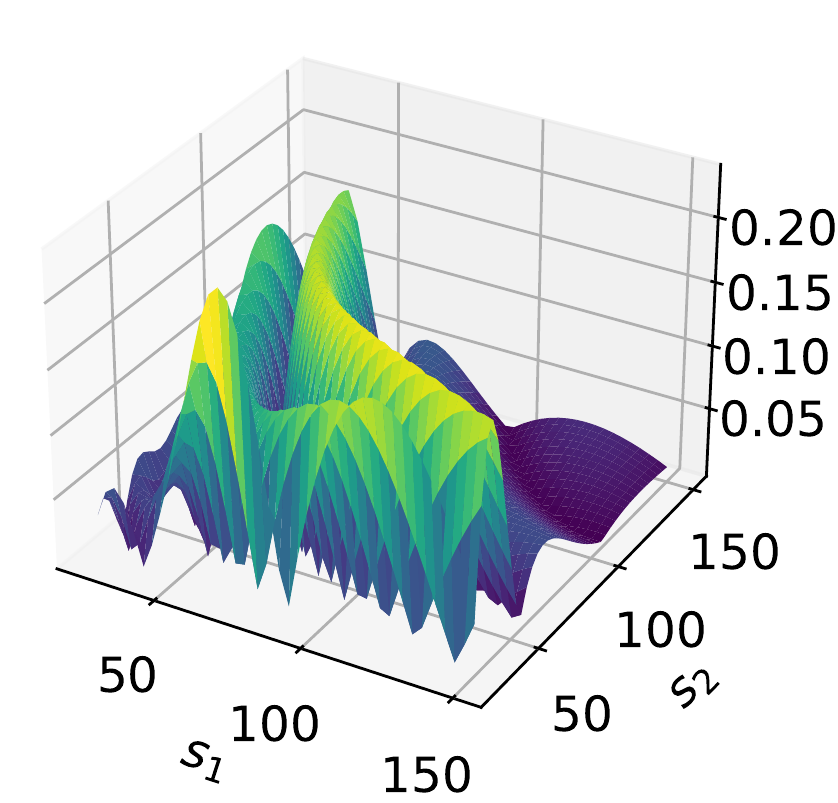}
\caption{Derivative of the price with respect to the log-price of the first underlying (left) and its error (right) for $d=2$.}
\label{fig:greeks}
\end{center}
\end{figure}

Similarly, sensitivities with respect to the parameters can be computed. However, we observed significantly less accuracy. To improve the approximated sensitivities, it might be helpful to add the PDE of the sensitivity to the loss function. This approach is well-defined and will be subject of further research.

\section{Conclusions}\label{sec:conclusion}
We have presented the deep parametric PDE method as an efficient solver for a whole family of parabolic problems. The use of deep neural networks allow us to solve high-dimensional problems accurately and fast. After a single training phase,  the method quickly evaluates the solution at different time, state and parameter values.

Among the applications in science and engineering, we focussed on financial applications and presented results for pricing of basket options in the Black-Scholes model. We have seen a good approximation of the option price over a wide range of parameters.  
The runtimes in the online phase and the accuracy are relatively stable over different dimensions. In the offline phase, the runtimes grow, but do not exhibit a curse of dimensionality. 

Compared to the reference pricer for the basket option with eight assets, the deep parametric PDE method has a speed-up factor of $30$ in the evaluation phase with only a slight reduction of the accuracy. Compared to the deep Galerkin method, accuracy and evaluation time are equivalent with the benefit of having the solution readily available for all parameters. The additional cost in the offline phase pays off, for example in situations where the training can be done in idle times.

These good results motivate future research exploiting a key advantage of the proposed method:
its structure allows for an easy adaptation to a multitude of other and more complex cases. For instance, the PDE can be replaced by  partial integro differential equations or inequalities, thus opening up a large range of applications to different types of options and models, and examples beyond finance. 
The inherent offline-online decomposition shows its full potential in cases where the solution has to be evaluated for many parameters, an example application is the  exposure calculation under uncertain parameters.
\section*{Acknowledgements}
The authors thank Domagoj Demeterfi and Tobias Win\-disch for valuable discussions and Christian P\"otz for   valuable discussions and for providing  the reference option pricer.

This research utilised Queen Mary's Apocrita HPC facility, supported by QMUL Research-IT. doi:10.5281/zenodo.438045
\clearpage
  \bibliographystyle{abbrv} 
  \bibliography{lit}
\end{document}